\newtheorem{definition}{Definition}
\newtheorem{lemma}{Lemma}
\newtheorem{corollary}{Corollary}
\newtheorem{theorem}{Theorem}
\newcommand{\vol}{\ensuremath{\text{vol}}}
\newcommand{\ncut}{\ensuremath{\text{NCut}}}
\newcommand{\nassoc}{\ensuremath{\text{NAssoc}}}
\newcommand{\nmod}{\ensuremath{\text{NMod}}}
\newcommand{\mmod}{\ensuremath{\text{Mod}}}
\renewcommand{\deg}{\ensuremath{\text{deg}}}
\newcommand{\Opt}{\ensuremath{\text{Opt}}}
\newcommand{\cquality}{q}
\newcommand{\E}{\mathbb{E}}
\title{Constant Approximation for Normalized Modularity\\ and Associations Clustering}
\author{%
 {Jakub Łącki}\\ \url{jlacki@google.com}\\
  Google Research
 \and
 {Vahab Mirrokni}\\ \url{mirrokni@google.com}\\
  Google Research%
 \and
 {Christian Sohler}\\ \url{sohler@cs.uni-koeln.de}\\
 University of Cologne
}
\begin{document}

\maketitle

\begin{abstract}%
We study the problem of graph clustering under a broad class of objectives in which the quality of a cluster is defined based on the ratio between the number of edges in the cluster, and the total weight of vertices in the cluster.
We show that our definition is closely related to popular clustering measures, namely normalized associations, which is a dual of the normalized cut objective, and normalized modularity.
We give a linear time constant-approximate algorithm for our objective, which implies the first constant-factor approximation algorithms for normalized modularity and normalized associations.
\end{abstract}


\section{Introduction}
Clustering is a fundamental task in unsupervised learning.
Many of the clustering problems are presented as optimization problems, in which the goal is to optimize a certain objective function over the space of all clusterings.\footnote{A notable exception is the hierarchical agglomerative clustering problem.}
From this point of view, one of the most fundamental theoretical questions is how efficiently and accurately we can solve the problem in question.
In particular, what is the best approximation ratio that we can achieve and what is the running time of the corresponding algorithm.

This question has been addressed, with varying success, for a number of clustering objectives.
For example, in the case of the widely-popular modularity objective~\cite{newman} any polynomial-time multiplicative approximation has been ruled under the $\textsc{P $\neq$ NP}$ conjecture~\cite{apx-mod}.
At the same time, while additive approximation can be computed in polynomial time, the corresponding algorithm is computationally expensive, as it is based on semidefinite programming~\cite{apx-mod}.

The hardness of developing efficient approximation algorithms appears to be a common feature among many clustering formulations.
This is likely caused by the fact that clustering objective functions are not designed to be easy to optimize, but rather to capture the desired properties of the final solution.
As a result, for a number of graph clustering problems the best known algorithms are either computationally expensive or do not provide good approximation guarantees.

For example, the minimizing disagreements formulation of correlation clustering admits a constant factor approximation only in the unweighted case~\cite{cc-sherali, cc-acn}.
In the weighted variant, which is more relevant in practice~\cite{shi2021scalable}, the best known approximation is logarithmic~\cite{demaine2006correlation}.
Similarly, in the case of the normalized cut objective~\cite{ncut1}, the best known algorithm rely on spectral methods which are both computationally expensive and only provide approximation guarantees which are dependent on the graph structure (and are in general not super-constant).

In this paper we study the problem of graph clustering for a broad class of objectives in which the quality of a cluster is defined based on the ratio between the number of edges in the cluster, and the total weight of vertices in the cluster. 
Here, the function assigning weights to vertices can be arbitrary.
Our goal is to maximize the sum of cluster qualities.
Since we want our algorithm
to automatically determine a good number of clusters, we use a value $\lambda$ to denote the gain (positive value of $\lambda$ or cost (negative value of $\lambda$) for each cluster. This might be viewed as a form of regularization. 
Furthermore, the relation between our approach and classical
graph clustering is analogous to the relation between 
uniform facility location and $k$-median.

We show that this objective captures and/or is closely related to multiple previously studied clustering objectives, namely normalized modularity~\cite{nmod-definition, nmod-definition3, nmod-definition2}, normalized associations and normalized cut~\cite{ncut1, ncut2}, and cluster density~\cite{BCC22}.

We show that for a broad range of parameters this objective admits a constant approximation algorithm running in linear time, which implies the first known constant-factor approximate algorithms for optimizing normalized modularity and normalized associations.

\subsection{Objective Definition}\label{sec:objective}
We use $G=(V,E)$ to denote an undirected graph with vertex set $V$ and edge set $E$, and $\deg(v)$ to denote the degree of vertex $v$.
For a set of vertices $C\subseteq V$ we define the volume of $C$ as $\vol(C) = \sum_{v\in C} \deg(v)$.
Moreover, we use $E(C)$ to denote the set of edges whose both endpoints are contained in $C$ and $E_{out}(C)$ to denote the set of edges, whose exactly one endpoint is in $C$.
Let $w : V \rightarrow \mathbb{R_{+}}$ be a function assigning weights to the vertices of $V$.
For a subset $C \subseteq V$ we will sometimes write $w(C)$ as a shorthand for $\sum_{v \in C} w(v)$.
Finally, we define \emph{clustering} of $G$ to be a partition of its vertex set $V$ into some number of disjoint and nonempty subsets, whose union is $V$.

In this paper we deal with maximizing a certain quality metric of a clustering.
Assume that $f$ is a quality metric (function) which assigns nonnegative value to any clustering.
Then, for $\alpha \in (0, 1]$ we say that a clustering $\mathcal{C}$ is $\alpha$-approximate if for any clustering $\mathcal{C}'$ we have $f(\mathcal{C}) \geq \alpha \cdot f(\mathcal{C}')$.
Similarly, we call a clustering algorithm $\alpha$-approximate if it computes an $\alpha$-approximate clustering.

\begin{definition}\label{def:quality}
Let $G=(V,E)$ be an undirected graph.
For any $C \subseteq V$ we define the quality of $C$ to be
$$
q_w(C) = \frac{2|E(C)|}{w(C)}.
$$
Moreover, assume that $\mathcal{C} = \{C_1, \ldots, C_k\}$ is a clustering of $G$ and let $\lambda \in \mathbb{R}$.
Then, we define the quality of $\mathcal{C}$ to be:

$$
Q_w^\lambda(\mathcal{C}) = \sum_{i=1}^k \left(\lambda + q_w(C_i)\right) = \sum_{i=1}^k \lambda + \frac{2|E(C_i)|}{w(C_i)}.
$$
Finally, we use $\Opt^\lambda_G$ to denote the clustering which mazimizes $Q_w^\lambda(\Opt^\lambda_G)$.
\end{definition}

We mainly focus on the case when $w(v) = \deg(v)$ for each $v \in V$, which is of particular importance, due to its connections with other clustering measures.
Because of that we will often write $Q^\lambda$ as a shorthand for $Q^\lambda_\deg$.
Observe that in this case $w(C) = \vol(C)$.
We note that since the objective is not well defined when $w(v) = 0$ for some $v \in V$, when dealing with $Q^\lambda_\deg$ we assume that the graph does not have isolated vertices.

\subsection{Our Results}
In this paper we consider the problem of finding a clustering $\mathcal{C}$ which \emph{maximizes} $Q^\lambda_w(\mathcal{C})$.
Our main technical contribution is an algorithm for optimizing $Q^\lambda_w$ up to a constant factor, whenever $\lambda \in [0,1]$.

\begin{restatable}{theorem}{thmmain}\label{thm:main-result}
For any $\lambda \in [0, 1]$ there is a algorithm which given a graph $G = (V, E, w)$ computes its clustering $\mathcal{C}$, such that $\E[Q^\lambda_w(\mathcal{C})] = \Theta(Q^\lambda_w(\Opt^\lambda_G))$.
The algorithm runs in linear time if the weight of each vertex is an integer in $[1, \mathrm{poly}(|V|)]$ and in $O(|E| \log |V|)$ time otherwise. 
\end{restatable}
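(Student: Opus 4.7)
The plan is to design a linear-time randomized algorithm that returns the better of two candidate clusterings: the trivial all-singleton clustering (which always achieves value $\lambda |V|$) and a non-trivial ``local'' clustering built from randomly-selected centers. The analysis proceeds by introducing a per-vertex local potential $\phi(v)$, showing that $Q^\lambda_w(\Opt^\lambda_G) = O(\sum_v \phi(v) + \lambda |V|)$, and showing that the non-trivial clustering achieves $\Omega(\sum_v \phi(v))$ in expectation; the singleton fallback then handles the $\lambda |V|$ term.

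For the upper bound on the optimum I would establish a charging lemma: the contribution $\lambda + 2|E(C)|/w(C)$ of each cluster $C$ of an arbitrary clustering is distributed over the vertices of $C$. The $\lambda$ term is charged to a representative of $C$ (total $\lambda k \le \lambda |V|$). Each internal edge $(u,v) \in E(C)$ contributes $2/w(C) \le 2/\max(w(u), w(v))$; splitting this between its endpoints gives each vertex $v$ a total charge bounded by a linear-time computable local quantity $\phi(v)$ depending only on $v$'s neighborhood and weights. Summing over all clusters and all vertices yields $Q^\lambda_w(\Opt^\lambda_G) = O(\sum_v \phi(v) + \lambda |V|)$.

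The algorithm then selects each vertex $v$ as a ``center'' independently with probability $p_v$ chosen proportional to a suitable function of $\phi(v)$ and $w(v)$, and attaches each non-center vertex to a uniformly random neighboring center if one exists (otherwise it remains a singleton). The expected volume and the expected number of internal edges of each center's cluster concentrate around quantities determined by the center's local neighborhood, so the expected contribution of center $c$'s cluster can be lower-bounded by a constant fraction of $\phi(c)$. Summing over centers and comparing with the singleton fallback yields a constant-factor approximation in expectation. Both stages run in $O(|V|+|E|)$ time when weights are polynomially-bounded integers (using bucketed sampling) and in $O(|E|\log|V|)$ time otherwise (using standard weighted-sampling data structures), matching the claimed running times.

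The main obstacle is choosing $\phi(v)$ so that it simultaneously (i) upper-bounds the per-vertex charge any clustering can place on $v$ and (ii) is attainable in expectation by the random center-and-attach algorithm. Getting (i) and (ii) to match up to a constant factor requires careful handling of both small dense clusters (where few edges each contribute a large quality) and large diffuse ones (where many edges each contribute a small quality), and of the extreme cases $\lambda \approx 0$ (where the singleton fallback is useless) and $\lambda$ close to $1$ (where it is already a trivial constant approximation). The restriction $\lambda \in [0,1]$ is what makes the two regimes compatible: the per-cluster reward is never much larger than the attainable per-cluster quality, so the singleton and local clusterings together cover the full parameter range.
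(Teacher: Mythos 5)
Your high-level plan (a singleton fallback to cover the $\lambda k$ term plus a nontrivial clustering for the $Q^0_w$ part) matches the paper's use of Lemma~\ref{lem:gt}, but the core of the argument --- the constant-factor algorithm for $Q^0_w$ --- has a genuine gap. The potential you propose, obtained by charging each intra-cluster edge $(u,v)$ the quantity $2/w(C)\le 2/\max(w(u),w(v))$ and splitting it between its endpoints, satisfies your requirement (i) but provably cannot satisfy (ii): on a star $K_{1,n}$ with unit weights one gets $\sum_v \phi(v) = \Theta(n)$, while every clustering $\mathcal{C}$ has $Q^0_w(\mathcal{C}) < 2$, since a cluster containing the center and $j$ leaves has quality $2j/(j+1)$ and all other clusters contain no edges. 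So no algorithm can achieve $\Omega(\sum_v \phi(v))$, and the ``main obstacle'' you name at the end --- finding a $\phi$ that is simultaneously a valid upper bound and attainable --- is not a technicality but the entire mathematical content of the theorem; your sketch does not resolve it. The same example shows that your attachment rule (each non-center joins a uniformly random neighboring center, with no cap on the cluster's total weight) yields no per-cluster lower bound of the form $\Omega(\phi(c))$, because a center can absorb arbitrarily much weight and dilute its quality.

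The paper closes exactly this gap with machinery that has no counterpart in your sketch: a random bipartization so that every kept edge goes from a lighter $T$-vertex to a heavier $S$-vertex (Lemma~\ref{lem:bipartite}, losing a factor $1/4$ in expectation); a structural argument that restricting to ``star'' clusters with a single $S$-vertex and the capacity constraint $w(C\cap T)\le 2w(C\cap S)$ loses only a factor $7$ (Lemma~\ref{lem:stars}, proved by repeatedly truncating and splitting clusters via Lemma~\ref{lem:choosec}, Corollary~\ref{cor:boundweight} and Lemma~\ref{lem:splitcluster}); and a greedy, matching-style $1/2$-approximation for the resulting assignment problem (CVWAP), in which each $T$-vertex is used in at most one cluster. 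The capacity constraint and the one-cluster-per-$T$-vertex condition are precisely what prevent the overcounting that kills the naive potential. If you wanted to salvage your route, $\phi(v)$ would have to be something like the truncated quantity $2j_v/w(v)$, where $j_v$ is the largest number of lightest neighbors of $v$ of total weight at most $2w(v)$, combined with a matching-type argument that these truncated potentials still dominate the optimum when each neighbor is assigned to only one center --- at which point you would essentially have re-derived the paper's CVWAP reduction.
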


The above algorithm yields constant factor approximation algorithms for several other common clustering objectives, which we describe below.
We note that in the canonical case of $w(v) = \deg(v)$, optimizing the objective is trivial whenever $\lambda \not\in[-1, 1]$ (see Lemma~\ref{lem:interpolate}).

\paragraph{Normalized modularity.}
The \emph{normalized modularity} objective was independently proposed by \cite{nmod-definition, nmod-definition3} and \cite{nmod-definition2} and then studied in~\cite{nmod-s1,nmod-s2, nmod-s3, nmod-s7, nmod-s4, nmod-s5, nmod-s6}.
It is based on the widely popular modularity objective~\cite{newman}.
The goal of the normalization is to address the commonly observed resolution limit in modularity clustering~\cite{fortunato2007resolution}.
We show that under a mild clusterability condition, we can optimize normalized modularity up to a constant factor.

Let $\ncut{}$ denote the normalized cut measure~\cite{ncut1, ncut2}.
We note that it is often assumed that a graph has a well-defined cluster structure when $\ncut(\mathcal{C}) \leq c \cdot k$ for some constant $c < 1$.
Our clusterability condition is (asymptotically) even weaker.

\begin{theorem}
There exists a constant $T$, such that the following holds.
Let $G = (V,E)$ be a graph, such that for some clustering $\mathcal{C}$ of $G$ into $k$ clusters we have $\ncut(\mathcal{C}) \leq k - T$.
Then, one can compute a constant-factor approximate clustering of $G$ with respect to the normalized modularity objective using a linear-time algorithm.
\end{theorem}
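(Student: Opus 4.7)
The plan is to reduce constant-factor approximation of normalized modularity to constant-factor approximation of normalized associations, and then invoke Theorem~\ref{thm:main-result} with $\lambda = 0$.

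\textbf{Step 1 (identify the target objective).} I would first observe that $\nassoc(\mathcal{C}) = \sum_i 2|E(C_i)|/\vol(C_i) = Q^0_\deg(\mathcal{C})$. Hence Theorem~\ref{thm:main-result} with $\lambda = 0$ yields, in linear time, a (randomized) clustering $\mathcal{C}$ such that $\E[\nassoc(\mathcal{C})] \geq \alpha \cdot \nassoc(\mathcal{C}')$ for every clustering $\mathcal{C}'$ and some absolute constant $\alpha > 0$.

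\textbf{Step 2 (modularity vs.\ associations).} Next I would prove an additive-bound lemma: there is an absolute constant $c_0$ such that
$$|\nmod(\mathcal{C}) - \nassoc(\mathcal{C})| \leq c_0 \text{ for every clustering } \mathcal{C}.$$
This is a direct unfolding of the definition of $\nmod$: the null-model correction term summed over clusters equals $\sum_C \vol(C)/(2|E|) = 1$, so under the standard convention one gets $\nmod(\mathcal{C}) = \nassoc(\mathcal{C}) - 1$, and thus $c_0 = 1$ suffices.

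\textbf{Step 3 (turning the clusterability hypothesis into a lower bound on the optimum).} Because $2|E(C)| + |E_{out}(C)| = \vol(C)$, every $k$-clustering satisfies $\nassoc(\mathcal{C}) + \ncut(\mathcal{C}) = k$. The witness clustering guaranteed by the hypothesis therefore satisfies $\nassoc(\mathcal{C}_0) \geq T$, so the optimum of $\nassoc$ is at least $T$ and, by Step~2, the optimum $\mathcal{C}^\star$ of $\nmod$ satisfies $\nmod(\mathcal{C}^\star) \geq T - c_0$.

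\textbf{Step 4 (combining).} Applying Step~2, then Step~1, then optimality, then Step~2 again,
$$\E[\nmod(\mathcal{C})] \geq \E[\nassoc(\mathcal{C})] - c_0 \geq \alpha \cdot \nassoc(\mathcal{C}^\star) - c_0 \geq \alpha \cdot \nmod(\mathcal{C}^\star) - (1+\alpha) c_0.$$
Choosing $T$ large enough as a function of $\alpha$ and $c_0$ (e.g.\ $T \geq c_0(3\alpha+2)/\alpha$) forces the guaranteed lower bound $\nmod(\mathcal{C}^\star) \geq T - c_0$ to dominate the additive error $(1+\alpha)c_0$, and the ratio $\E[\nmod(\mathcal{C})]/\nmod(\mathcal{C}^\star)$ becomes at least $\alpha/2$, yielding a constant-factor approximation.

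The main obstacle is Step~2: one must pin down the precise form of normalized modularity used by the cited works (several conventions exist in the literature) and verify that its deviation from $\nassoc$ really is bounded by a universal constant, independent of both $G$ and $\mathcal{C}$. Once that additive-bound lemma is in place, the rest of the argument is essentially bookkeeping that absorbs the additive error into a slightly worse multiplicative constant using the clusterability-induced lower bound on the optimum.
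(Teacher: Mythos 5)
Your proposal follows essentially the same route as the paper: the paper likewise observes that $\nmod(\mathcal{C})$ is an increasing affine function of $Q^0_\deg(\mathcal{C}) = \nassoc(\mathcal{C})$ (so the two objectives share the same maximizer), uses $\nassoc(\mathcal{C}') = k - \ncut(\mathcal{C}') \geq T$ to push the computed clustering's $\nassoc$ value above the zero-crossing of that affine map, and then absorbs the additive offset into the multiplicative constant. The only discrepancy is in your Step~2: under the paper's convention $\nmod(\mathcal{C}) = \frac{1}{m}\bigl(\tfrac{1}{2}\nassoc(\mathcal{C}) - 1\bigr)$, so the correct statement is that $\nmod$ is a \emph{positive multiple} of $\nassoc - c_0$ rather than that $|\nmod - \nassoc| \leq c_0$ (the graph-dependent $1/(2m)$ scaling makes the latter false, but is harmless for multiplicative approximation) --- this is precisely the convention issue you flagged yourself, and fixing it leaves Steps~3 and~4 unchanged.
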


To the best of our knowledge our result constitutes the first polynomial time algorithm for maximizing normalized modularity with a provable approximation ratio.
Previous works using normalized modularity relied on a spectral approach~\cite{nmod-s4, nmod-s5}, which has a much higher running time.

\paragraph{Normalized cut and normalized associations.}  We observe that
$$
Q^0_\deg(\mathcal{C}) = \nassoc(\mathcal{C}),
$$
where $\nassoc$ is the \emph{normalized association} objective, which was introduced in the seminal paper of~\cite{ncut1} and then generalized to multiple clusters~\cite{ncut2}.
The normalized association objective is a dual of the prominent \emph{normalized cut} objective~\cite{ncut1, ncut2}, in the sense that $\ncut(\mathcal{C}) + \nassoc(\mathcal{C}) = k$, where $k$ is the number of clusters in the clustering $\mathcal{C}$.

More generally, if we denote the normalized cut by $\ncut$, we have
\begin{equation}\label{eq:ncut}
Q^\lambda_\deg(\mathcal{C}) = \sum_{i=1}^k \lambda + \frac{2|E(C_i)|}{\vol(C_i)} = \lambda \cdot k + \sum_{i=1}^k \frac{\vol(C_i) - |E_{out}(C_i)|}{\vol(C_i)} = (\lambda+1) \cdot k - \ncut(\mathcal{C}).
\end{equation}

While the values $Q^\lambda_\deg(\mathcal{C})$ and $\ncut(\mathcal{C})$ are tightly connected, there is a key difference between the typical approaches to optimizing them.
Namely, the usual approach to minimizing $\ncut$ is to fix the number of clusters $k$, and then find the clustering that optimizes the objective and has the desired number of clusters.
On the other hand, in this paper we will consider maximizing $Q^\lambda_\deg$ \emph{without} fixing the number of clusters.
Still, show that when optimizing $Q^\lambda_\deg$ the number of clusters can be tuned by varying the value of $\lambda$.

Indeed, as we show in Lemma~\ref{lem:interpolate}, for a connected graph, for $\lambda \leq -1$ the optimal clustering has only one cluster, and for $\lambda \geq 1$ the optimal clustering has the maximum possible number of clusters (each has size $1$).


Furthermore, we note that the connection between $Q^\lambda_\deg$ and $\ncut$ is not limited to theory.
It was shown empirically~\cite{ganc} that optimizing the $\nassoc$ measure greedily (which is equivalent to a certain greedy method for optimizing $Q^\lambda_\deg$) yields a very effective algorithm for optimizing normalized cut.
Namely, the algorithm delivers very similar quality to the state of the art methods, such as spectral clustering~\cite{ncut1} and Graclus~\cite{graclus}, while being much more efficient in practice.\footnote{While the analysis in~\cite{ganc} used a small number of datasets, we have implemented a greedy algorithm for minimizing $\ncut$ based on optimizing $Q^\lambda_\deg$ and tuning $\lambda$.
Having evaluated the method on a much larger amount of datasets, we were able to observe that our greedy approach typically outperforms spectral clustering. We will provide more details in the full version of this paper.}
Hence, we believe that the connection to the normalized cut is relevant both in theory and practice.

\paragraph{Edge density.}
Given $C \subseteq V$, the \emph{edge density} of $C$ is defined as $d(C) := |E(C)| / |C|$. 
If we set $w(v) = 1$ for each $v \in V$, we have
$$
Q^0_w(\mathcal{C}) = \sum_{i=1}^k \frac{2|E(C_i)|}{|C_i|} = 2 \cdot \sum_{i=1}^k d(C_i).
$$
The problem of
maximizing sum of cluster densities has been studied in 
\cite{BCC22,AGGMT15}. While a $2$-approximation algorithm for the problem is
known under a different name \cite{AGGMT15} (see also \cite{BCC22}),
Theorem~\ref{thm:main-result} implies a different a constant approximation algorithm for the problem using a more general approach.

\section{Basic Properties of Our Objective}
We show that tuning the parameter $\lambda$ allows us to interpolate between two extreme clusterings: one that has a single cluster and one that has the maximum possible number of clusters.

\begin{lemma}\label{lem:interpol}
Let $G=(V,E,w)$ be a connected graph and let $M = \frac{2|E|}{\min_{v\in V} w(v)}$.
Then, (a) for any $\lambda < -M$ there is a unique clustering maximizing $Q^{\lambda}_w$, which has exactly one cluster containing all vertices of $G$, and (b) for any $\lambda > M$ there is a unique clustering maximizing $Q^\lambda_w$, which consists solely of clusters of size $1$.
\end{lemma}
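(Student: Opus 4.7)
The plan is to reduce both statements to a single uniform bound on the sum of cluster qualities. Concretely, I would first show that for every clustering $\mathcal{C} = \{C_1, \ldots, C_k\}$ of $G$,
$$
\sum_{i=1}^k q_w(C_i) \;=\; \sum_{i=1}^k \frac{2|E(C_i)|}{w(C_i)} \;\le\; \frac{2\sum_i|E(C_i)|}{\min_{v\in V} w(v)} \;\le\; \frac{2|E|}{\min_v w(v)} \;=\; M,
$$
using $w(C_i) \geq \min_v w(v)$ and the fact that distinct clusters share no internal edges. With this in hand, the decomposition $Q_w^\lambda(\mathcal{C}) = \lambda \cdot k + \sum_{i=1}^k q_w(C_i)$ makes visible that, once $|\lambda|$ strictly exceeds $M$, the $\lambda k$ term dominates the sum and the optimizer is forced to the corresponding extreme of the range $k \in \{1, 2, \ldots, |V|\}$.

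For part (a), assume $\lambda < -M$ and consider any clustering $\mathcal{C}$ with $k \geq 2$ clusters. I would compute
$$
Q_w^\lambda(\{V\}) - Q_w^\lambda(\mathcal{C}) \;=\; -\lambda(k-1) + q_w(V) - \sum_{i=1}^k q_w(C_i).
$$
Since $-\lambda(k-1) \geq -\lambda > M$, the uniform bound above combined with nonnegativity of $q_w(V)$ yields strict positivity. Hence $\{V\}$ beats every nontrivial clustering, and is the unique maximizer.

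For part (b), assume $\lambda > M$ and let $\mathcal{S}$ denote the singletons clustering. Because $|E(\{v\})| = 0$ for every $v$, we have $Q_w^\lambda(\mathcal{S}) = \lambda|V|$. For any other clustering $\mathcal{C}$ with $k < |V|$ clusters, a symmetric computation gives
$$
Q_w^\lambda(\mathcal{S}) - Q_w^\lambda(\mathcal{C}) \;=\; \lambda(|V|-k) - \sum_{i=1}^k q_w(C_i) \;\ge\; \lambda - M \;>\; 0,
$$
establishing that $\mathcal{S}$ is the unique maximizer.

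I do not expect a serious obstacle here; the only point requiring any care is the strict inequality needed for uniqueness, and that strictness is supplied directly by the strict hypotheses $\lambda < -M$ and $\lambda > M$, without any further appeal to the structure of $G$ beyond the finiteness of $M$, which is guaranteed because $w$ takes values in $\mathbb{R}_+$. Connectedness of $G$ is used only implicitly, insofar as it ensures the lemma's setting is nondegenerate (and could in fact be relaxed in the argument above).
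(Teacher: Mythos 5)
Your proposal is correct and follows essentially the same route as the paper: both establish the uniform bound $Q^0_w(\mathcal{C}) \le M$ and then use the decomposition $Q^\lambda_w(\mathcal{C}) = \lambda k + Q^0_w(\mathcal{C})$ to show the $\lambda k$ term forces the extreme number of clusters (the paper phrases this as a local split/merge improvement, you as a direct comparison with the extremal clustering, but the mechanism is identical). Your side remark that connectedness is not actually needed for the argument is also accurate.
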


\begin{proof}
For any clustering $\mathcal{C} = \{C_1, \ldots, C_k\}$ consisting of $k$ clusters we have $$Q^0_w(\mathcal{C}) = \sum_{i=1}^k \frac{2|E(C_i)|}{w(C_i)} \leq \sum_{xy \in E} \frac{2}{\min_{v\in V} w(v)} = M.$$
Thus $Q^0_\deg(\mathcal{C}) \in [0, M].$

By definition, we also have that $Q^\lambda_w(\mathcal{C}) = \lambda \cdot k + Q^0_w(\mathcal{C})$.
By looking at this formula it is easy to see that for $\lambda > M$ if a clustering has any cluster of size more than $1$, splitting this cluster arbitrarily increases the objective.
Namely, the $\lambda \cdot k$ summand increases by more than $M$, which surely dominates any resulting change to $Q^0_w(\mathcal{C})$.
As a result, the optimal clustering has the maximum possible number of clusters.
Similarly, for $\lambda < -M$ the optimal clustering has only one cluster.
\end{proof}

Moreover, in the case of $w(v) = \deg(v)$ we can effectively limit $\lambda$ to $[-1, 1]$.

\begin{lemma}\label{lem:interpolate}
Let $G$ be a connected graph with at least $3$ vertices.
Then, (a) for any $\lambda \leq -1$ there is a unique clustering maximizing $Q^{\lambda}$, which has exactly one cluster containing all vertices of $G$, and (b) for any $\lambda \geq 1$ there is a unique clustering maximizing $Q^1$, which consists solely of clusters of size $1$.
\end{lemma}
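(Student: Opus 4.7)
The plan is to reduce both parts of the lemma to two elementary bounds on the per-cluster quality $q(C) := 2|E(C)|/\vol(C)$. First, $q(C) \leq 1$ always, since $\vol(C) \geq 2|E(C)|$; moreover the inequality is strict unless $C$ has no outgoing edges, which by connectivity of $G$ forces $C = V$. Second, $q(\{v\}) = 0$ for every singleton cluster. After rewriting $Q^\lambda(\mathcal{C}) = \lambda k + \sum_i q(C_i)$, both parts reduce to short arguments.

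For part (a), with $\lambda \leq -1$, I would simply apply $q(C_i) \leq 1$ to every cluster, obtaining $Q^\lambda(\mathcal{C}) \leq k(\lambda+1) \leq \lambda+1 = Q^\lambda(\{V\})$, where the second inequality uses $\lambda + 1 \leq 0$. For $\lambda < -1$ the second inequality is strict whenever $k \geq 2$, which already gives uniqueness. The boundary case $\lambda = -1$ requires slightly more care: equality forces $q(C_i) = 1$ for every cluster, so every cluster is a union of connected components of $G$, and connectivity then forces $k = 1$.

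For part (b), with $\lambda \geq 1$, the key ingredient is a counting argument. Letting $s$ denote the number of singleton clusters, $n \geq s + 2(k-s) = 2k - s$, so the number of non-singleton clusters satisfies $k - s \leq n - k$. Since singletons contribute $0$ to $\sum_i q(C_i)$ and each non-singleton contributes at most $1$, we have $\sum_i q(C_i) \leq n - k$, and hence
\[
Q^\lambda(\mathcal{C}) \;\leq\; k\lambda + (n-k) \;\leq\; n\lambda \;=\; Q^\lambda(\{\{v\} : v \in V\}),
\]
where the second step uses $\lambda \geq 1$. For uniqueness I would inspect the equality conditions: equality in the first inequality forces every non-singleton $C_i$ to satisfy $q(C_i) = 1$ (hence $C_i = V$ by connectivity, so $k = 1$) and simultaneously to have size exactly $2$; these two conditions are incompatible once $n \geq 3$, so equality forces $k = n$, i.e.\ all singletons.

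The hard part will be cleanly pinning down the uniqueness claims at the boundary values $\lambda = \pm 1$. The hypothesis $n \geq 3$ is exactly what breaks a degenerate tie between $\{V\}$ and the all-singletons clustering that would otherwise occur when $G$ is a single edge, so the proof of uniqueness must use it visibly in the argument combining $q(C_i) = 1$ with $|C_i| = 2$.
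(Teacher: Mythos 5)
Your proof is correct and follows essentially the same route as the paper's: part (a) via the observation that each cluster contributes at most $1$ to $\sum_i q(C_i)$ with equality only for a union of connected components, and part (b) via the same counting argument bounding the number of non-singleton clusters by $n-k$. The only difference is cosmetic — you phrase everything through $q(C_i)\le 1$ rather than through $\ncut$ contributions, and your explicit equality analysis (a non-singleton attaining $q=1$ must equal $V$ yet have size $2$, impossible for $n\ge 3$) is a slightly cleaner way to pin down uniqueness at the boundary values than the paper's case split.
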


\begin{proof}
We have $Q^{-1}(\mathcal{C}) = -\ncut(\mathcal{C})$.
Hence, our goal is to find a clustering which minimizes $\ncut$ (without fixing the number of clusters).
It is easy to see that in the optimal clustering there can be no inter-cluster edges, which for a connected graph is the case only when all vertices are in the same cluster.
This argument can be easily extended to $\lambda < -1$.

Let us now consider the case when $\lambda = 1$ and denote by $n$ the number of vertices of $G$ (again, the argument easily extends to the case of $\lambda > 1$).
We first show that a clustering into a large number of clusters must contain some number of clusters of size $1$.
Namely, if we cluster $n$ vertices into $k$ clusters and $s$ clusters have size $1$, then the remaining $k-s$ clusters have size at least $2$, and so we must have $s + 2(k-s) \leq n$, which implies $s \geq 2k - n$.
This implies that if $\mathcal{C}$ is a clustering into $k$ clusters and $k < n$, we have $\ncut(\mathcal{C}) > 2k-n$.
Indeed, in the normalized cut measure, the contribution of each cluster $C$ is $|E_{out}(C)| / \vol(C)$ (see Equation~\ref{eq:ncut}), which is equal to $1$ when $|C|=1$ (here we need the assumption that the graph has at least $2$ vertices).
Hence, each cluster of size $1$ contributes $1$ to the normalized cut.
Additionally, we have at least one cluster of size $2$, which makes the inequaltiy strict.

Let us now consider three possible cases for the number of clusters.
If $k = n$ (all clusters have size $1$) we have
$$
Q^1(\mathcal{C}) = 2n - \ncut(\mathcal{C}) = n.
$$
If $1 < k < n$ we have
$$
Q^1(\mathcal{C}) = 2k - \ncut(\mathcal{C}) < 2k - 2k + n = n.
$$
Finally, for $k=1$, since $n \geq 3$
$$
Q^1(\mathcal{C}) = 2 < n.
$$
Out of these two options, the value of $Q^1$ is maximized only when each vertex is in a separate cluster.
\end{proof}

As an interesting side result, we note that we can compute a nontrivial upper bound on $Q^0$ by computing a maximum spanning tree of a properly weighted graph.

\begin{restatable}{theorem}{thmub}\label{thm:ub}
  Let $G=(V,E)$ be a connected graph. For each edge $e=(u,v)$ define its weight $W(e) = \frac{1}{\max\{deg(u),\deg(v)\}}$. Let $M$ be the cost of the maximum spanning
  forest of the resulting graph. Then,
  $
  \frac{M}{3\sqrt{n}}-\frac{1}{3} \le  Q^0(\Opt_G)\le 2M,
  $
  where $\Opt_G$ a clustering of $G$ which maximizes $Q^0(\Opt_G)$.
\end{restatable}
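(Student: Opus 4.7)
The plan is to prove the two inequalities separately, using different techniques. For the upper bound $Q^0(\Opt_G)\le 2M$ I will work cluster by cluster, leveraging the maximum spanning tree of each induced cluster subgraph. For the lower bound I will construct an explicit clustering from the maximum spanning tree of $G$, using a weighted matching on the portion of $T^\star$ that lies in ``low-degree'' territory.

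For the upper bound, fix any clustering $\mathcal{C}=\{C_1,\dots,C_k\}$. Splitting any disconnected cluster along its connected components does not decrease $Q^0$ (by the mediant inequality $(a+c)/(b+d)\le a/b+c/d$ for nonnegative $a,b,c,d$), so I may assume each $C_i$ is connected. Let $T_i^\star$ be the maximum spanning tree of $G[C_i]$ under the weights $W$. The key claim will be
\[
\frac{|E(C_i)|}{\vol(C_i)} \;\le\; W(T_i^\star),
\]
equivalently $Q^0(C_i)\le 2W(T_i^\star)$. For a size-$2$ cluster this is immediate from $1/(\deg u+\deg v)\le 1/\max(\deg u,\deg v)$, and for general clusters I expect the bound to follow from a double-counting argument that exploits the cycle property of the maximum spanning tree: for any non-tree edge $e=(u,v)$, every edge on its fundamental cycle in $T_i^\star$ has $W$-weight at least $W(e)\ge 1/\vol(C_i)$, so the ``unit mass'' $1/\vol(C_i)$ contributed by each edge of $E(C_i)$ can be charged to tree edges without exceeding their $W$-budgets. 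Once this per-cluster inequality is in hand, summing gives $Q^0(\mathcal{C})\le 2\sum_i W(T_i^\star)$. Since the $T_i^\star$ live on disjoint vertex sets, their union is a forest in $G$, and any forest extends to a spanning tree without decreasing $W$-weight, so its total $W$-weight is at most $M$; thus $Q^0(\mathcal{C})\le 2M$.

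For the lower bound set $s=\sqrt{n}$, let $T^\star$ be the maximum spanning tree of $G$, and define $F=\{e\in T^\star:\max(\deg u,\deg v)\le s\}$. Every edge of $T^\star\setminus F$ has $W<1/s$, so $W(T^\star\setminus F)<(n-1)/s\le\sqrt{n}$ and $W(F)\ge M-\sqrt{n}$; moreover every vertex incident to $F$ has $\deg_G\le s$, so $F$ has maximum degree at most $s$. Since $F$ is bipartite, LP duality identifies its maximum-weight matching with its minimum-weight vertex cover: summing the cover constraints $y_u+y_v\ge W(e)$ over edges of $F$ gives $s\sum_v y_v\ge \sum_v y_v\deg_F(v)\ge W(F)$, so the max-weight matching $\mathcal{M}$ in $F$ has $W(\mathcal{M})\ge W(F)/s\ge M/\sqrt{n}-1$. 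Take the clustering whose clusters are the matched pairs $\{u,v\}\in\mathcal{M}$ together with singletons for all unmatched vertices. A matched pair contributes $2/(\deg u+\deg v)\ge 1/\max(\deg u,\deg v)=W((u,v))$ to $Q^0$ and singletons contribute $0$, so
\[
Q^0(\Opt_G)\;\ge\;W(\mathcal{M})\;\ge\;\frac{M}{\sqrt n}-1\;\ge\;\frac{M}{3\sqrt n}-\frac{1}{3},
\]
where the last inequality is trivial when $M<\sqrt n$ (the right-hand side is then non-positive) and for $M\ge\sqrt n$ rearranges to $2M/(3\sqrt n)\ge 2/3$.

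I expect the main obstacle to be the per-cluster inequality $|E(C_i)|/\vol(C_i)\le W(T_i^\star)$ in the upper bound, since it requires a careful charging that fully exploits the max spanning tree's cycle property; the lower bound is comparatively direct once the bucketing threshold $s=\sqrt n$ and the weighted König duality are identified.
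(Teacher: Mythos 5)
Your lower bound is correct and is essentially the paper's own argument: both truncate the maximum spanning tree at the weight threshold $1/\sqrt{n}$ and extract from the surviving low-degree forest a matching carrying a $\Theta(1/\sqrt{n})$ fraction of its weight, then use the matched pairs as two-element clusters (the paper extracts the matching greedily, losing a factor $1+2\sqrt{n}\le 3\sqrt{n}$; your K\H{o}nig/LP-duality extraction loses only a factor $\sqrt{n}$, which you then relax to match the stated constants). That half of your proposal is fine.

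The upper bound, however, has a genuine gap exactly where you flag it. You correctly reduce to connected clusters and to the per-cluster claim $|E(C)|/\vol(C)\le W(T_C^\star)$, but the cycle-property charging is not a proof. The cycle property only guarantees that each tree edge on the fundamental cycle of a non-tree edge $e$ has weight at least $W(e)\ge 1/\vol(C)$, i.e.\ that it could absorb the mass of any \emph{single} such edge; it says nothing about congestion when many non-tree edges must route their mass through the same few tree edges. Making the charging rigorous amounts to verifying a Hall-type feasibility condition for every subforest $F\subseteq T_C^\star$ (that the edges of $G[C]$ with both endpoints in a single component of $F$ carry total mass at most $W(F)$), and the cycle property alone does not obviously deliver this. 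The paper sidesteps the spanning-tree structure entirely: orient each edge of $E(C)$ toward its endpoint of larger degree, let $E_C(v)$ be the edges oriented into $v$, and observe that this star has weight exactly $|E_C(v)|/\deg(v)$ since every edge in it satisfies $W(e)=1/\deg(v)$. Because $\sum_v |E_C(v)|=|E(C)|$ and $\sum_v\deg(v)=\vol(C)$, the mediant inequality yields a vertex $v$ with $|E_C(v)|/\deg(v)\ge |E(C)|/\vol(C)$, and this star is a forest in $G[C]$, hence of weight at most $W(T_C^\star)$. You should replace your charging sketch with this (or some other complete) argument before the upper bound can be considered proved.
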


\section{Constant-factor Approximation Algorithms}
\label{sec:alg}
In this section we present our approximation algorithms.
Specifically, we first give a constant-factor approximation algorithm for optimizing $Q^0_w$, and then show how to extend it to a constant-factor approximation algorithm for $Q^\lambda_\deg$, as long as $\lambda \in (0, 1]$.
We also present approximation algorithm for the normalized modularity measure in Section~\ref{sec:nmod}.
We note that as long as node weights are integers, the algorithms work in linear time (and in near-linear time otherwise).

Let us now focus on the case when $\lambda=0$.
Our algorithm first transforms the input graph in several steps into an input to the following 
discrete optimization problem, which we call \emph{capacitated vertex-weighted assignment problem (CVWAP)}.
During our transformation we make sure that in expectation a constant-factor approximate solution to the CVWAP problem in the transformed graph corresponds to a constant factor approximation of the $Q^0_w$ objective in the original graph.

The input to CVWAP is a bipartite graph $G = ((S,T), E, w)$ with positive vertex weights given by $w$.
The goal is to find a clustering $\mathcal{C} =\{ C_1, \ldots, C_k\}$ of a \emph{subset} of $S \cup T$, such that for 
each $i$, $|C_i \cap S| = 1$ and $w(C_i \cap T) \leq 2w(C_i \cap S)$. The objective is to maximize $v(\mathcal{C}) := \sum_{i=1}^k \frac{2|C_i \cap T|}{3w(C_i \cap S)}$.
We then prove that a constant-approximate solution to the CVWAP can be computed in linear time using a simple greedy algorithm, similar to the greedy algorithm for maximum weight bipartite matching. 

We define a \emph{partial clustering}
of $G=(V,E)$ to be a partition of some subset $U \subseteq V$.
Let $\mathcal{C} = \{C_1, \ldots, C_k\}$ be a partial clustering of $G$.
We extend the definition of $Q^0_w$ to a partial clustering $\mathcal{C}$ in the natural way by defining 
$
Q^0_w(\mathcal C) = \sum_{C \in \mathcal{C}} q_w(C).
$
Since each partial clustering $\mathcal{C}$ can be extended to a (non-partial) clustering $\mathcal{C}'$, such that $Q^0_w(\mathcal{C}) \leq Q^0_w(\mathcal{C}')$ without loss of generality in the following we can turn our attention to finding an $\alpha$-approximate \emph{partial} clustering of $G$.

We now show that (losing a factor of $1/4$ in the expected approximation ratio) we can reduce the input graph to a bipartite graph, in which all edges have their higher-weight endpoint on the same side, and solve the CVWAP problem on this bipartite graph. 

Our transformation to a bipartite graph simply colors each vertex randomly either red or blue (both with probability $1/2$), then selects one of the colors to be $S$ (one side of the bipartition) and the other to be $T$ (the other side). Then it keeps all edges $(s,t)$ with $s\in S$ to $t\in T$ that have $w(s) \geq w(t)$ and discards the rest. The next lemma shows that this construction
in expectation preserves the quality of an optimal solution within a factor of $1/4$.
\begin{restatable}{restatablelemma}{lembipartite}\label{lem:bipartite}
Let $G = (V, E, w)$ be a graph.
There exists a randomized linear-time algorithm that outputs a bipartite graph $H = ((S \cup T), E_H, w)$, such that:
		(i) $H$ is a subgraph of $G$, (ii) $S \cup T = V$,
		(iii) for each edge $(s,t) \in E_H$, where $s \in S$ and $t \in T$, we have $w(s) \geq w(t)$, and
		(iv) $\E[Q^0_w(\Opt_H)] \geq Q^0_w(\Opt_G)/4$.
	where $\Opt_H$ and $\Opt_G$ are optimal clusterings in $H$ and $G$, respectively.
\end{restatable}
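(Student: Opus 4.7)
The plan is to use the construction hinted at in the paragraph preceding the lemma, and then argue that each edge of $G$ survives in $H$ with probability exactly $1/4$, which after a linearity-of-expectation calculation gives the desired bound.

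Concretely, I would let each vertex independently choose to join $S$ or $T$ with probability $1/2$ each, and then, for every edge $\{u,v\}$ of $G$ with $w(u) \geq w(v)$ (ties broken by any fixed rule), include $\{u,v\}$ in $E_H$ iff $u \in S$ and $v \in T$. Properties (i), (ii), and (iii) are immediate from the construction, and the whole process takes time $O(|V|+|E|)$ since it only inspects each edge once. The only nontrivial claim is (iv).

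For (iv), the key observation is that an edge $\{u,v\}$ with $w(u) \geq w(v)$ is kept in $H$ precisely when $u \in S$ and $v \in T$, an event of probability $1/4$. So for any fixed subset $C \subseteq V$, linearity of expectation gives $\E[|E_H(C)|] = |E(C)|/4$, and since $w(C)$ does not depend on the random coloring,
\[
\E\!\left[\frac{2|E_H(C)|}{w(C)}\right] = \frac{1}{4}\cdot \frac{2|E(C)|}{w(C)} = \frac{q_w(C)}{4}.
\]
Summing over the clusters of $\Opt_G = \{C_1, \ldots, C_k\}$ (which is also a valid, though generally suboptimal, clustering of $H$, since $V(H) = V(G)$), I get $\E[Q^0_w(\Opt_G \text{ viewed as a clustering of } H)] = Q^0_w(\Opt_G)/4$.

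Finally, since $\Opt_H$ is by definition optimal among clusterings of $H$, for every outcome of the randomness we have $Q^0_w(\Opt_H) \geq Q^0_w(\Opt_G)$ evaluated on $H$. Taking expectations preserves this inequality and yields $\E[Q^0_w(\Opt_H)] \geq Q^0_w(\Opt_G)/4$, which is exactly (iv). There is no real obstacle here; the only thing one must be careful about is not conflating the objective value of a fixed clustering evaluated on $H$ with the optimal value on $H$, and noting that $w$ is deterministic, so it factors cleanly out of the expectation.
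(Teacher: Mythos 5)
Your proposal is correct and follows essentially the same route as the paper's proof: a uniformly random bipartition of $V$, keeping each edge oriented from its heavier endpoint in $S$ to its lighter endpoint in $T$ so that it survives with probability $1/4$, then applying linearity of expectation to $\Opt_G$ viewed as a (feasible) clustering of $H$ and finishing with the optimality of $\Opt_H$. The only cosmetic difference is that you fix a tie-breaking rule and get the survival probability exactly $1/4$, whereas the paper first two-colors and then randomly names the sides, obtaining probability at least $1/4$; both yield the same bound.
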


Our next step is to argue that solving the CVWAP problem on the transformed instance gives a constant approximation algorithm.
For this purpose, we first show that we can restrict ourselves to considering solutions where each cluster has a 
single vertex on the $S$-side and a bounded total weight on the $T$-side. 
In this step we pay a factor of $1/7$ in the approximation ratio. 

\begin{restatable}{restatablelemma}{lemstars}\label{lem:stars}
Let $H = ((S,T), E, w)$ be a bipartite graph, where for each $(s,t) \in E$, such that $s \in S$ and $t \in T$ we have $w(s) \geq w(t)$.
Then, there exists a partial clustering $\mathcal{C} = \{C_1, \ldots, C_k\}$ of $S \cup T$, such that the following hold:
(i) $|C_i \cap S| = 1$ for each $i$,
(ii) $w(C_i \cap T) \leq 2w(C_i \cap S)$ for each $i$, and
(iii) $Q^0_w(\mathcal{C}) \geq Q^0_w(\Opt^0_H)/7$.
\end{restatable}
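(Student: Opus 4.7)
My plan is to prove Lemma~\ref{lem:stars} by transforming the optimal clustering $\Opt^0_H$ into a valid partial clustering cluster by cluster. Since optimal clusters are vertex-disjoint, it suffices to establish the following local claim: for each cluster $C^* = S^* \cup T^* \in \Opt^0_H$ (where $S^* = C^* \cap S$, $T^* = C^* \cap T$, and $E^* = E(S^*, T^*)$), there exists a collection $\mathcal{F}_{C^*}$ of vertex-disjoint stars of the required form $\{s\} \cup T_s$ (with $s \in S^*$, $T_s \subseteq N(s) \cap T^*$, $w(T_s) \leq 2w(s)$) whose total $Q^0_w$-quality is at least $q_w(C^*)/7$. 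Aggregating these collections across all optimal clusters yields the partial clustering $\mathcal{C}$ satisfying properties (i)--(iii).

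For a single cluster $C^*$, I would use a greedy assignment: process the $T^*$-vertices in non-decreasing $w(t)$ order, and for each $t$ set $\pi(t) = s$ for any eligible $s \in N(t) \cap S^*$ whose current load $w(\pi^{-1}(s))$ leaves room for $w(t)$ within capacity $2w(s)$; otherwise leave $\pi(t) = \bot$. The stars $\{s\} \cup \pi^{-1}(s)$ (for each $s$ with nonempty preimage) form $\mathcal{F}_{C^*}$ and satisfy (i) and (ii) by construction. The key structural observation driving the analysis is: whenever $\pi(t) = \bot$, every $s \in N(t) \cap S^*$ must satisfy $w(\pi^{-1}(s)) > w(s)$. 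Indeed, for $s$ to have rejected $t$ we need $w(\pi^{-1}(s)) + w(t) > 2w(s)$, and the bipartite edge-weight property $w(t) \leq w(s)$ then forces $w(\pi^{-1}(s)) > w(s)$. Hence every $S^*$-vertex incident to an unassigned $T^*$-vertex has already absorbed more than its own weight in $T$-mass.

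For property (iii), I lower-bound $Q^0_w(\mathcal{F}_{C^*})$ by the CVWAP-style surrogate $\sum_s \frac{2|\pi^{-1}(s)|}{3w(s)}$, which is valid because the constraint $w(T_s) \leq 2w(s)$ gives $w(s) + w(T_s) \leq 3w(s)$. Let $W_S = w(S^*)$ and $W_T = w(T^*)$. The analysis splits into two regimes. When $W_T \leq 2W_S$, capacity is not the bottleneck and the greedy assigns a constant fraction of $T^*$ by mass; combined with $q_w(C^*) = 2|E^*|/(W_S + W_T) \leq 2|E^*|/W_S$, the surrogate is readily shown to be $\Omega(q_w(C^*))$. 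When $W_T > 2W_S$, the observation above forces $\sum_s w(\pi^{-1}(s)) = \Omega(W_S)$, which translates (via $w(t) \leq w(s)$) into a comparable edge count, again yielding the required bound against $q_w(C^*) \leq 2|E^*|/W_T$. The main obstacle I anticipate is making the constants compound exactly to $1/7$: several small losses (the $3w(s)$ denominator slack, the greedy's arbitrary choice among eligible $s$ for each $t$, and edge losses when capacity is tight) must be balanced carefully, possibly via a fallback construction in the $|S^*| = 1$ case where a cleaner analysis applies.
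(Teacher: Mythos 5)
Your high-level plan --- reduce to a per-cluster statement and pack each optimal cluster with capacity-respecting stars --- is reasonable (the paper's own proof, which iteratively splits clusters and bounds the total loss by a telescoping argument, also never merges optimal clusters, so a per-cluster version of (iii) is indeed true), and the surrogate bound $q_w(\{s\}\cup T_s)\ge \frac{2|T_s|}{3w(s)}$ is correct. But there is a genuine gap centered on one issue the sketch never addresses: a vertex $t\in T^*$ contributes $d(t)=|N(t)\cap S^*|$ edges to the numerator of $q_w(C^*)$, yet your packing captures only the single edge to $\pi(t)$, worth $\frac{2}{3w(\pi(t))}$ in the surrogate. If $\pi(t)$ is a heavy neighbor while $t$ also has light neighbors, the captured value is an unboundedly small fraction of $t$'s contribution. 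Concretely, take $T^*=\{t\}$ with $w(t)=1$ and $S^*=\{s_1,\dots,s_k,s_{k+1}\}$ with $w(s_i)=1$ for $i\le k$ and $w(s_{k+1})=k^2$, all adjacent to $t$: your greedy may legally set $\pi(t)=s_{k+1}$, producing a packing of quality $\frac{2}{k^2+1}$ against $q_w(C^*)=\frac{2(k+1)}{k^2+k+1}$, a ratio of order $1/k$. This cluster is not optimal, so the claim you need may still hold for the clusters of $\Opt^0_H$ --- but then the proof must actually invoke optimality of $C^*$ (or replace ``any eligible $s$'' by, say, the lightest eligible neighbor and handle the case where the light neighbors are already full), and it currently does neither.

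Relatedly, both regime claims are unsupported as stated. In the regime $W_T\le 2W_S$, ``the greedy assigns a constant fraction of $T^*$ by mass'' does not imply the surrogate is $\Omega(2|E^*|/W_S)$, precisely because of the one-edge-per-assigned-vertex issue above (in the example every vertex of $T^*$ is assigned). In the regime $W_T>2W_S$, your key observation only constrains the $S$-vertices adjacent to some unassigned $t$, so it yields $\sum_s w(\pi^{-1}(s))\ge w(S')$ for that subset $S'$ only, not $\Omega(W_S)$. The paper resolves exactly this difficulty by always centering the extracted star at the minimum-weight vertex $p$ of $C\cap S$ (Lemma~\ref{lem:splitcluster}): the cross edges from $C_2\cap S$ to $C_1\cap T$ are bounded using $w(C)\ge |C_2\cap S|\,w(p)$, and the edges from $p$ into $C_2\cap T$ are bounded via Lemma~\ref{lem:choosec}. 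Some mechanism of this kind --- a weight-aware choice of star centers together with a charging scheme that accounts for all $d(t)$ edges of each $T$-vertex, not just the assignment edge --- is what your argument is missing.
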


Before we prove Lemma~\ref{lem:stars}, we need to show some other properties first.
We start with an auxiliary lemma, which shows that for the special case of a connected bipartite graph with only one vertex in $S$, we can greedily construct a good cluster who does not have too much weight on the $T$-side.
\begin{lemma}\label{lem:choosec}
Let $G = ((S \cup T), E, w)$ be a connected bipartite graph, where $S = \{s\}$ and $w(s) \geq w(t)$ for each $t \in T$.
Let $t_1, \ldots, t_l$ be vertices of $T$ sorted such that $w(t_1) \leq w(t_2) \leq \ldots \leq w(t_l)$ and let $1 \leq j \leq l$ be the maximum index such that $\sum_{i=1}^j w(t_i) \leq 2w(s)$.
Let $\Opt_G$ be a subset of $V$ that maximizes $Q^0_w(\Opt)$ and let $C = \{s, t_1, \ldots, t_j\}$.
Then, (i) $3 q_w(C) \geq \frac{2|C\cap T|}{w(s)}$, and (ii) $3 q_w(C) \geq Q^0_w(\Opt_G)$.
\end{lemma}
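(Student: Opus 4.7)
The plan is to first verify (i) by observing that the graph is essentially a star centred at $s$, and then to deduce (ii) by a case analysis on how $n^* := |T^*|$ compares to $j$, where $T^*$ is the $T$-part of the optimal single-cluster solution.

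For (i), since $G$ is connected and bipartite with $S=\{s\}$, every $t_i$ is a neighbour of $s$, so all edges incident to vertices of $C$ except $s$ go to $s$, and $|E(C)| = j = |C\cap T|$. By the choice of $j$, $w(C) = w(s) + \sum_{i=1}^{j} w(t_i) \leq 3 w(s)$. Hence $q_w(C) = \frac{2j}{w(C)} \geq \frac{2j}{3w(s)}$, which is (i).

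For (ii), note that only clusters containing $s$ can carry any edges, so $Q^0_w(\Opt_G) = \max_{T^*\subseteq T} q_w(\{s\}\cup T^*)$. Fix such an optimal $T^*$, set $n^* = |T^*|$ and $W^* = w(T^*)$, and split into two cases. If $n^* \leq j$, then (i) gives $3q_w(C) \geq \tfrac{2j}{w(s)} \geq \tfrac{2n^*}{w(s)} \geq \tfrac{2n^*}{w(s)+W^*} = q_w(\{s\}\cup T^*)$, done. Otherwise $n^* \geq j+1$, and by maximality of $j$ we have $\sum_{i=1}^{j+1} w(t_i) > 2w(s)$; since the $t_i$ are sorted, prefix averages are non-decreasing, so
\[
\frac{2w(s)}{j+1} \;<\; \frac{\sum_{i=1}^{j+1} w(t_i)}{j+1} \;\leq\; \frac{\sum_{i=1}^{n^*} w(t_i)}{n^*} \;\leq\; \frac{W^*}{n^*},
\]
where the last step uses that $t_1,\dots,t_{n^*}$ are the $n^*$ lightest vertices of $T$. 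This yields $j > \frac{2w(s)\, n^*}{W^*} - 1$.

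The final ingredient is the simple bound $W^* \leq n^*\, w(s)$, which holds because every $t\in T$ satisfies $w(t)\leq w(s)$ by hypothesis. Combining with (i),
\[
3q_w(C) \;\geq\; \frac{2j}{w(s)} \;>\; \frac{4n^*}{W^*} - \frac{2}{w(s)} \;\geq\; \frac{4n^*}{W^*} - \frac{2n^*}{W^*} \;=\; \frac{2n^*}{W^*} \;\geq\; \frac{2n^*}{w(s)+W^*},
\]
which is exactly $q_w(\{s\}\cup T^*) = Q^0_w(\Opt_G)$. The only subtle step is the second case, where the interplay between the maximality of $j$ (which constrains $j$ from below via the prefix-sum inequality) and the cap $W^* \leq n^*\, w(s)$ (which absorbs the $-2/w(s)$ correction) is what makes the factor $3$ tight; getting the accounting right between these two opposing bounds is the main thing to be careful about.
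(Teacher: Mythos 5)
Your proof is correct and follows essentially the same route as the paper's: part (i) via $w(C) \leq 3w(s)$ on the star centred at $s$, and part (ii) via a case split on $|T^*|$ versus $j$, using the sortedness of the $t_i$ and the maximality of $j$. The only difference is in the bookkeeping of the second case: the paper distills maximality of $j$ into the intermediate fact $\sum_{i=1}^j w(t_i) \geq w(s)$ and chains inequalities through $\frac{2j}{\sum_{i=1}^j w(t_i)}$, whereas you lower-bound $j$ via the prefix average at $j+1$ and absorb the $-2/w(s)$ correction using $W^* \leq n^* w(s)$; both computations are valid.
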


\begin{proof}
Let $C$ be as defined in the lemma.
We observe that $\Opt_G$ must contain $s$ as otherwise it does not contain edges
and $Q^0_w(\Opt_G)$ would be $0$. Since $G$ is connected, for each vertex $v \in T$ 
there is an edge $(s,v)$. Thus, we have
\[
	3 q_w(C) = \frac{6|C \cap T|}{w(s) + \sum_{i=1}^{|C \cap T|} w(t_i)} \geq \frac{2|C \cap T|}{w(s)}.
\]
We observe that $\Opt_G \cap T$ contains the $|\Opt_G \cap T|$ vertices of $T$ with smallest weights as otherwise, we could improve the solution by swapping a vertex in $\Opt_G$ with 
a vertex of smaller weight. If $|\Opt_G \cap T| \leq |C \cap T| = j$, we have
\[
	Q^0_w(\Opt_G) = \frac{2|\Opt_G \cap T|}{w(s) + \sum_{i=1}^{|\Opt_G \cap T|} w(t_i)} \leq \frac{2|C \cap T|}{w(s)} \leq 3 q_w(C).
\]

Let us now assume the opposite case, that is $|\Opt_G \cap T| > |C \cap T| = j$.
To complete the proof, we use two properties.
First, observe that $\frac{x}{\sum_{i=1}^x w(t_i)}$ is nonincreasing as a function of $x$, since $w(t_i)$ are sorted in nondecreasing order.
Second, the assumption that $|\Opt_G \cap T| > |C \cap T|$ implies that $\sum_{i=1}^j w(t_i) \geq w(s)$.
Indeed, if this was not the case, we would be able to increase $j$ without violating the constraint that $\sum_{i=1}^j w(t_i) \leq 2w(s)$ (note that $w(s) \geq w(t_i)$ for all $i$).
We use these two facts to derive
\begin{align*}
Q^0_w(\Opt_G) & =
	\frac{2|\Opt_G \cap T|}{w(s) + \sum_{i=1}^{|\Opt_G \cap T|} w(t_i)} \leq \frac{2|\Opt_G \cap T|}{\sum_{i=1}^{|\Opt_G \cap T|} w(t_i)} \leq \frac{2j}{\sum_{i=1}^j w(t_i)} \leq \frac{2|C \cap T|}{w(s)} \leq 3 q_w(C).
\end{align*}
Note that we use the properties in third and second to last step respectively.
\end{proof}

By applying Lemma~\ref{lem:choosec} we get the following.

\begin{corollary}\label{cor:boundweight}
Let $G = ((S,T), E, w)$ be a bipartite graph, where for each $(s,t) \in E$, such that $s \in S$ and $t \in T$ we have $w(s) \geq w(t)$.
Consider $C \subseteq V$, such that $|C \cap S| = 1$.
Then, there exists $C' \subseteq C$ such that (i) $|C' \cap S| = 1$, (ii) $w(C' \cap T) \leq 2w(C' \cap S)$, and (iii) $2 q_w(C') \geq q_w(C) - q_w(C')$.
\end{corollary}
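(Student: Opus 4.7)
The plan is to reduce the statement to a direct application of Lemma~\ref{lem:choosec} by restricting attention to the induced subgraph on $C$ and cleaning up isolated $T$-vertices. Write $C \cap S = \{s\}$, and let $N := \{t \in C \cap T : (s,t) \in E\}$ be the $T$-vertices of $C$ adjacent to $s$. Since the only $S$-vertex in $C$ is $s$ and $G$ is bipartite, every edge of $E(C)$ has $s$ as one endpoint and lies between $s$ and $N$; in particular $|E(C)| = |N|$.

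The first observation is that pruning the non-neighbors of $s$ from $C$ can only help. Set $C'' := \{s\} \cup N$; then $E(C'') = E(C)$ while $w(C'') \leq w(C)$, so $q_w(C'') \geq q_w(C)$. If $N = \emptyset$, then $q_w(C) = 0$ and one may simply take $C' = \{s\}$, which satisfies (i) and (ii) vacuously and (iii) trivially. So assume $N \neq \emptyset$ and consider the bipartite graph $G''$ induced by $C''$. By construction $G''$ is connected (every vertex of $N$ is adjacent to $s$), has only the single $S$-vertex $s$, and inherits from $G$ the property $w(s) \geq w(t)$ for each $t \in N$; hence the hypotheses of Lemma~\ref{lem:choosec} hold on $G''$.

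Now I would apply Lemma~\ref{lem:choosec} to $G''$: sort the vertices of $N$ as $t_1, \ldots, t_l$ in nondecreasing weight order, let $j$ be the maximal index with $\sum_{i \leq j} w(t_i) \leq 2 w(s)$, and define $C' := \{s, t_1, \ldots, t_j\}$. By construction $C' \subseteq C$, $|C' \cap S| = 1$, and $w(C' \cap T) \leq 2 w(s) = 2 w(C' \cap S)$, giving (i) and (ii). For (iii), note that $C''$ is itself a candidate subset of the vertex set of $G''$, so the optimal single-cluster subset $\Opt_{G''}$ satisfies $Q^0_w(\Opt_{G''}) \geq q_w(C'')$. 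Lemma~\ref{lem:choosec} then yields
\[
3 q_w(C') \;\geq\; Q^0_w(\Opt_{G''}) \;\geq\; q_w(C'') \;\geq\; q_w(C),
\]
which rearranges to $2 q_w(C') \geq q_w(C) - q_w(C')$, establishing (iii).

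There is no real obstacle here; the only thing to be careful about is the degenerate case $N = \emptyset$, where Lemma~\ref{lem:choosec} is not directly applicable because it assumes $l \geq 1$, and the reduction to the induced subgraph $G''$, which is what lets us invoke connectedness and apply the lemma.
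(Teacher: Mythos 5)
Your proof is correct and follows essentially the same route as the paper's: restrict to the subgraph induced by $s$ and its neighbors inside $C$, invoke Lemma~\ref{lem:choosec} to get $C'$, and chain $3q_w(C') \geq Q^0_w(\Opt) \geq q_w(C)$ into $2q_w(C') \geq q_w(C) - q_w(C')$. You are in fact slightly more careful than the paper, in that you explicitly justify the pruning step $q_w(C'') \geq q_w(C)$ and handle the degenerate case where $s$ has no neighbors in $C$, both of which the paper's proof leaves implicit.
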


\begin{proof}
Let $S'=\{s\}$ and $T'$ be the neighbors of $s$ that are in $C$.
Then the subgraph $H$ induced by $S' \cup T'$ is a connected bipartite graph 
with $S'=\{s\}$, i.e. it satisfies the prerequisites of Lemma \ref{lem:choosec}.
Thus, we can apply Lemma \ref{lem:choosec} and we obtain a cluster $C'$
that has $|C'\cap S| = 1$, $w(C' \cap T) \leq 2w(C' \cap S)$, and
$3q_w(C')\ge q_w(\Opt_H) \ge q_w(C)$, which implies $2q_w(C') \ge q_w(C) - q_w(C')$.
\end{proof}

We now show the first reduction that allows us to replace a cluster that has multiple vertices on the $S$-side with two clusters.

\begin{lemma}\label{lem:splitcluster}
Let $G = ((S,T), E, w)$ be a bipartite graph, where for each $(s,t) \in E$, such that $s \in S$ and $t \in T$ we have $w(s) \geq w(t)$.
Consider $C \subseteq V$, such that $|C \cap S| > 1$.
Then, there exists a partial clustering of $C$ into $C_1$ and $C_2$, such that $|C_1 \cap S| = 1$, $w(C_1 \cap T) \leq 2w(C_1 \cap S)$, and $6 q_w(C_1) \geq q_w(C) - (q_w(C_1) + q_w(C_2))$.
\end{lemma}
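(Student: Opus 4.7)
The plan is to pick $s$ to be a vertex of minimum weight in $C \cap S$, apply Lemma~\ref{lem:choosec} (through Corollary~\ref{cor:boundweight}) to the subgraph induced by $\{s\} \cup T_s$ (where $T_s = C \cap T \cap N(s)$) to obtain $C_1$, and set $C_2 = C \setminus C_1$. Since the graph is bipartite and $C_1 \cap S = \{s\}$, one has $|E(C_1)| = j := |C_1 \cap T|$ and $w(C_1) \leq 3 w(s)$, while $|C_1 \cap S| = 1$ and $w(C_1 \cap T) \leq 2 w(s) = 2 w(C_1 \cap S)$ hold by construction.

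I would first reduce the target inequality to the combinatorial bound $e_{12} \cdot w(C_1) \leq 6 j \cdot w(C)$, where $e_{12}$ is the number of edges between $C_1$ and $C_2$. This follows from the identity $q_w(C) \cdot w(C) = q_w(C_1) w(C_1) + q_w(C_2) w(C_2) + 2 e_{12}$ combined with $w(C_1), w(C_2) \leq w(C)$, which together give $q_w(C) - q_w(C_1) - q_w(C_2) \leq 2 e_{12}/w(C)$; requiring this to be at most $6 q_w(C_1) = 12 j/w(C_1)$ is exactly the reduced bound.

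Next, I decompose $e_{12}$ by which endpoint lies in $C_1$: there are $|T_s| - j$ edges from $s$ to $T_s \setminus C_1$, and $d_t - 1$ edges from each $t \in C_1 \cap T$ to $(C \cap S) \setminus \{s\}$, where $d_t$ is the degree of $t$ in $C$. Hence $e_{12} \leq |T_s| + D$ with $D = \sum_{t \in C_1 \cap T} d_t$. I then aim to prove the two symmetric estimates $|T_s| \cdot w(C_1) \leq 3 j \cdot w(C)$ and $D \cdot w(C_1) \leq 3 j \cdot w(C)$, whose sum gives the desired $e_{12} w(C_1) \leq 6 j w(C)$.

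The first estimate is immediate when Lemma~\ref{lem:choosec} fits all of $T_s$ (then $j = |T_s|$ and $w(C_1) \leq 3w(s) \leq 3 w(C)$); when the greedy stops early, its invariants force $w(C_1 \cap T) > w(s)$, so every $t \in T_s \setminus C_1$ has weight $\geq w(C_1 \cap T)/j > w(s)/j$ (being heavier than the average of the lightest $j$ selected); summing yields $w(T_s) > |T_s| w(s)/j$, and the bound follows from $w(T_s) \leq w(C)$ and $w(C_1) \leq 3 w(s)$. The second estimate is exactly where the choice of $s$ pays off: $D \leq j |C \cap S|$ trivially, and $|C \cap S| \cdot w(s) \leq w_S \leq w(C)$ precisely because $s$ has the minimum weight in $C \cap S$, so multiplying by $w(C_1) \leq 3 w(s)$ finishes it. The main obstacle is identifying this choice of $s$—more obvious options, such as taking $s$ maximizing $|T_i|/w(s_i)$, let one bound $q_w(C)$ against $q_w(C_1)$ directly but fail to give a tight handle on $D$, whereas minimizing $w(s)$ makes the combined two-part bound close cleanly without ever having to analyze $q_w(C_2)$ beyond its nonnegativity.
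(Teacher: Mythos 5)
Your proposal is correct and follows essentially the same route as the paper: pick the minimum-weight vertex $s$ of $C\cap S$, build $C_1$ by the greedy of Lemma~\ref{lem:choosec}, bound the loss $q_w(C)-q_w(C_1)-q_w(C_2)$ by the cross-edge term $2e_{12}/w(C)$, and split $e_{12}$ by which side of the cut the $S$-endpoint lies on, bounding each part by $3q_w(C_1)$. The only difference is cosmetic: where the paper invokes claims (i) and (ii) of Lemma~\ref{lem:choosec} directly, you re-derive the corresponding estimates inline (via $w(C_1)\leq 3w(s)$ and the early-stopping case analysis for the greedy).
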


\begin{proof}
Let $p$ be the element of $C \cap S$ that has the smallest weight (an arbitrary one, in case of a tie).
Now choose $C_1 \subseteq C$ just like we chose $C$ in the statement of Lemma~\ref{lem:choosec} (here, we consider the subgraph of $G$ induced on $C$). 
We set $C_1 = \{p, t_1, \ldots, t_j\}$ and $C_2 = C \setminus C_1$.
We observe that $C_1$ satisfies $|C_1 \cap S| = 1$ and
$w(C_1 \cap T) \leq 2w(C_1 \cap S)$ by construction.

To prove that $C_1$ and $C_2$ satisfy the claim, we first upper bound $q_w(C) - (q_w(C_1) + q_w(C_2))$.
(Note that we do not yet use the properties of $C_1$ and $C_2$, other than the fact that their weights are smaller than the weight of $C$.)
	\begin{align*}
		q_w(C) - & (q_w(C_1)  + q_w(C_2)) = \sum\limits_{\substack{(s,t) \in E \\ s \in (C \cap S) \\ t \in (C \cap T)}}  \mathbbm{1}_{s \in C_1, t \in C_1} \left( \frac{2}{w(C)} - \frac{2}{w(C_1)}\right)\\
	& \qquad +   \mathbbm{1}_{s \in C_2, t \in C_2} \left( \frac{2}{w(C)} - \frac{2}{w(C_2)}\right) +   \mathbbm{1}_{s \in C_1, t \in C_2} \frac{2}{w(C)} + \mathbbm{1}_{s \in C_2, t \in C_1} \frac{2}{w(C)} \\
		& \leq \sum\limits_{\substack{st \in E \\ s \in (C \cap S) \\ t \in (C \cap T)}} \mathbbm{1}_{s \in C_1, t \in C_2} \frac{2}{w(C)} + \mathbbm{1}_{s \in C_2, t \in C_1} \frac{2}{w(C)}
	\end{align*}

We first bound both summands separately.
Since $C_1$ is the optimal subset such that $C_1 \cap S = p$, we have
	\begin{align*}
		\sum \limits_{\substack{(s,t) \in E \\ s \in (C \cap S) \\ t \in (C \cap T)}}  \mathbbm{1}_{s \in C_1, t \in C_2} \frac{2}{w(C)}  & = \sum\limits_{\substack{(s,t) \in E \\ s = p \\ t \in (C_2 \cap T)}}  \frac{2}{w(C)}
		\leq  \sum\limits_{\substack{(s,t) \in E \\ s = p \\ t \in (C_2 \cap T)}}  \frac{2}{w((C_2 \cap T)\cup \{p\})}\\
		& = q_w((C_2 \cap T) \cup \{p\}) \leq 3 q_w(C_1).
	\end{align*}
Note that in the last step we use the second claim of Lemma~\ref{lem:choosec}.

To upper bound the second summand we use the fact that $p$ is the lowest-weight vertex of $C$ on the $S$-side.
Note that this implies $w(C) \geq |C \cap S|w(p) \geq |C_2 \cap S|w(p)$.
Hence,
	\begin{align*}
		\sum\limits_{\substack{(s,t) \in E \\ s \in (C \cap S) \\ t \in (C \cap T)}} & \mathbbm{1}_{\substack{s \in C_2\\ t \in C_1}} \frac{2}{w(C)} \leq \frac{2 |C_2 \cap S|\cdot|C_1 \cap T|}{w(C)} \leq \frac{2 |C_2 \cap S|\cdot|C_1 \cap T|}{w(p) |C_2 \cap S|} = \frac{2 \cdot|C_1 \cap T|}{w(p)} \leq 3\cquality(C_1).
	\end{align*}
Hence, we get that $q_w(C) - (q_w(C_1) + q_w(C_2)) \leq 6 q_w(C_1)$.
\end{proof}

We can now prove Lemma~\ref{lem:stars}.

\begin{proof}[Proof of Lemma~\ref{lem:stars}]
Consider the following process, which produces a sequence of partial clusterings of $H$. 
Start with $\Opt_H^0$.
As long as there is a cluster that does not satisfy conditions (i) or (ii) of the lemma, we apply either Corollary~\ref{cor:boundweight} (if the cluster only violates condition (ii)) or Lemma~\ref{lem:splitcluster} (otherwise).
Assume the process consists of $r$ steps.
As a result we obtain a sequence of clusterings $\mathcal{C}_1, \ldots, \mathcal{C}_{r+1}$, where $\mathcal{C}_1 = \Opt_H^0$, and $C_{r_1}$ satisfies both (i) and (ii).
To complete the proof, we show that $\mathcal{C}_{r+1}$ satisfies (iii).

Consider step $i$ (i.e. when we create $\mathcal{C}_{i+1}$).
Note that both Lemma~\ref{lem:splitcluster} and Lemma~\ref{cor:boundweight} produce one cluster that satisfies both conditions.
Regardless of which of the two reductions is used in the step, denote this cluster by $D_i$.
Observe that $\mathcal{C}_{i+1}$ is obtained from $\mathcal{C}_i$ by either splitting a cluster into two clusters or by replacing a single cluster.
By using Lemma~\ref{lem:splitcluster} and Corollary~\ref{cor:boundweight} we have $Q^0_w(\mathcal{C}_i) - Q^0_w(\mathcal{C}_{i+1}) \leq 6 q_w(D_i)$.
Moreover, each $D_i$ already satisfies both conditions and is not going to be replaced or split further, and thus is an element of $\mathcal{C}_{r+1}$.
This implies that $\sum_{i=1}^{r} q_w(D_i) \leq Q^0_w(\mathcal{C}_{r+1})$.

Hence, we get that $Q^0_w(\mathcal{C}_1) - Q^0_w(\mathcal{C}_{r+1}) = \sum_{i=1}^{r} (Q^0_w(\mathcal{C}_i) - Q^0_w(\mathcal{C}_{i+1})) \leq \sum_{i=1}^{r} 6 q_w(D_i) \leq 6\cdot Q^0_w(\mathcal{C}_{r+1}).$
By rearranging terms, we have $Q^0_w(\mathcal{C}_{r+1}) \geq Q^0_w(\mathcal{C}_1)/7 = Q^0_w(\Opt_H^0)/7$.
\end{proof}

Using Lemma~\ref{lem:stars} we can obtain the final reduction to the CVWAP problem.

\begin{restatable}{restatablelemma}{lemapxr}\label{lem:apxr}
Assume there exists an $\alpha$-approximate algorithm for CVWAP.
Then, there exists a polynomial time randomized algorithm that, 
given an unweighted graph $G$, finds a partial clustering $\mathcal{C}$ of $V$, such that 
$\E[Q^0_w(\mathcal{C})] \geq \frac{\alpha}{84} \cdot Q^0_w(\Opt_G)$,
where $\Opt_G$ denotes an optimal clustering
in $G$.
\end{restatable}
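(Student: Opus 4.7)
The plan is to chain three reductions whose approximation losses multiply to $4 \cdot 7 \cdot 3 = 84$. First I would apply Lemma~\ref{lem:bipartite} to $G$ to obtain, in randomized linear time, a bipartite subgraph $H = ((S \cup T), E_H, w)$ in which every edge has its heavier endpoint on the $S$-side and $\E[Q^0_w(\Opt_H)] \ge Q^0_w(\Opt_G)/4$. All subsequent reasoning takes place in $H$; since $H$ is a subgraph of $G$, any partial clustering in $H$ is also a partial clustering in $G$, and its $Q^0_w$ value can only grow when re-evaluated in $G$.

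Next I would invoke Lemma~\ref{lem:stars} on $H$ to obtain a partial clustering $\mathcal{C}^\star$ whose every cluster $C$ already satisfies the CVWAP feasibility constraints ($|C \cap S| = 1$ and $w(C \cap T) \le 2 w(C \cap S)$) and whose objective value satisfies $Q^0_w(\mathcal{C}^\star) \ge Q^0_w(\Opt_H)/7$.

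The new ingredient is to relate the CVWAP objective $v$ to $q_w$ on CVWAP-feasible clusterings. For any such cluster $C$, the edges of $H$ inside $C$ are exactly the edges between its single $S$-vertex and the chosen $T$-vertices, so $|E_H(C)| = |C \cap T|$, and the total weight satisfies $w(C \cap S) \le w(C) \le 3\, w(C \cap S)$ because of the constraint $w(C \cap T) \le 2 w(C \cap S)$. Plugging these bounds into $q_w(C) = 2|E_H(C)|/w(C)$ gives the sandwich $v(\{C\}) \le q_w(C) \le 3\, v(\{C\})$, which sums to $v(\mathcal{C}) \le Q^0_w(\mathcal{C}) \le 3\, v(\mathcal{C})$ for every CVWAP-feasible $\mathcal{C}$. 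Applied to $\mathcal{C}^\star$, this yields $v(\mathcal{C}^\star) \ge Q^0_w(\mathcal{C}^\star)/3 \ge Q^0_w(\Opt_H)/21$, so the CVWAP optimum on $H$ is itself at least $Q^0_w(\Opt_H)/21$.

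Finally I would run the assumed $\alpha$-approximate CVWAP algorithm on $H$, obtaining a feasible partial clustering $\mathcal{C}$ with $v(\mathcal{C}) \ge \alpha \cdot Q^0_w(\Opt_H)/21$, and hence $Q^0_w(\mathcal{C}) \ge v(\mathcal{C}) \ge \alpha \cdot Q^0_w(\Opt_H)/21$. Taking expectations over the randomness from Lemma~\ref{lem:bipartite} then yields $\E[Q^0_w(\mathcal{C})] \ge \alpha \cdot Q^0_w(\Opt_G)/84$, which is the required bound. I do not anticipate a serious obstacle here: the three reductions are already in hand, and the only fresh calculation is the one-line CVWAP-to-$q_w$ comparison, in which the factor $3$ is forced precisely by the feasibility constraint $w(C \cap T) \le 2 w(C \cap S)$.
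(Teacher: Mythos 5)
Your proposal is correct and follows essentially the same route as the paper: the factor $84 = 4 \cdot 7 \cdot 3$ is obtained by chaining Lemma~\ref{lem:bipartite}, Lemma~\ref{lem:stars}, and the sandwich $v(\mathcal{C}) \le Q^0_w(\mathcal{C}) \le 3\,v(\mathcal{C})$ for CVWAP-feasible clusterings, exactly as in the paper's argument. Your version is in fact slightly more careful in lower-bounding the CVWAP optimum via the explicit clustering $\mathcal{C}^\star$ from Lemma~\ref{lem:stars} rather than applying that lemma's bound directly to the $v$-maximizer.
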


\subsection{Solving the CVWAP Problem}
We now describe a greedy $1/2$-approximate algorithm for the CVWAP problem. The algorithm and its analysis are analogous to the greedy algorithm for maximum weight matching.

Let $G = ((S,T), E, w)$ be the input to the CVWAP problem.
Observe that the solution to the CVWAP problem can be uniquely described by a set of intra-cluster edges, that is edges whose both endpoints belong to the same element of the vertex partition.
The condition that each element of the partial clustering can contain at most one element of $S$ translates to the condition that each element of $T$ may have at most one incident edge in the solution.
In the following, we represent clusters by its edges assuming that the vertex set of a cluster
is defined by the union of the vertices of its edges.

The algorithm first assigns each edge $(s,t)$ ($s \in S, t\in T$) a weight of $1/(w(s) + w(t))$ and then sorts the edges in non-increasing order of weights.
We begin with a solution where we have a singleton cluster for each element of $S$.
Then, the algorithm considers the edges in sorted order.
For an edge $(s,t)$ it adds vertex $t$ to the cluster of $s$, if only this does not violate the capacity constraints and no edge incident to $t$ has been previously added to the solution.

\begin{restatable}{restatablelemma}{cvwapapx}
The above algorithm computes a $1/2$-approximate solution to the CVWAP problem.
\end{restatable}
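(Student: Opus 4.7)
The plan is to argue via a charging scheme that parallels the classical analysis of greedy maximum weight matching. Let $\mathcal{A}$ and $\mathcal{O}$ denote the edge sets representing the algorithm's output and a fixed optimal CVWAP solution, so that each edge $(s,t)$ contributes $\frac{2}{3w(s)}$ to the objective. I will map every edge of $\mathcal{O}$ to an edge of $\mathcal{A}$ in such a way that each edge of $\mathcal{A}$ absorbs total charge at most twice its own contribution; summing over $\mathcal{A}$ will then give $v(\mathcal{O}) \le 2\, v(\mathcal{A})$.

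First I would classify each $e = (s^*, t^*) \in \mathcal{O} \setminus \mathcal{A}$ by the reason greedy rejected it. In Case A, at the moment $(s^*,t^*)$ was processed, $t^*$ had already been assigned in $\mathcal{A}$ via some edge $(s', t^*)$ with $s' \neq s^*$; I charge $e$ to $(s', t^*)$. Since edges are sorted by non-decreasing $w(s)+w(t)$ and $t^*$ is fixed, the fact that $(s',t^*)$ was processed first forces $w(s') \le w(s^*)$, so the contribution $\frac{2}{3w(s^*)}$ of $e$ is at most the contribution $\frac{2}{3w(s')}$ of its target. Because each $t$-vertex is incident to at most one edge of $\mathcal{O}$, every edge of $\mathcal{A}$ receives at most one Case A charge. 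Edges $e \in \mathcal{A} \cap \mathcal{O}$ are mapped to themselves, which fits the Case A pattern with $s' = s^*$.

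In Case B, greedy rejected $(s^*,t^*)$ because the capacity at $s^*$ had been exhausted. I plan to charge $e$ to one of the edges of $\mathcal{A}$ at $s^*$, and the main obstacle will be to show this assignment can be made injective: the number $m$ of Case B $\mathcal{O}$-edges at $s^*$ is at most the number $k$ of $\mathcal{A}$-edges at $s^*$. To prove $m \le k$, let $y$ denote the weight of the first $t$-neighbor of $s^*$ that greedy rejected by capacity. Every Case B edge $(s^*,t^*)$ satisfies $w(t^*) \ge y$, and by $\mathcal{O}$'s capacity the total weight of such edges is at most $2 w(s^*)$, giving $m \le 2 w(s^*)/y$. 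On the other side, every $\mathcal{A}$-accepted $t$-neighbor of $s^*$ has weight at most $y$, yet their weights sum to strictly more than $2w(s^*) - y$ (otherwise the first capacity-rejection would have fit), so $k y > 2 w(s^*) - y$, i.e., $k + 1 > 2 w(s^*)/y$. Combining these two estimates yields $m < k + 1$, hence $m \le k$ since both are integers.

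Once the Case B injection is in place, each $f = (s', t) \in \mathcal{A}$ absorbs at most one Case A charge, of value at most $\frac{2}{3 w(s')}$, and at most one Case B charge, also of value at most $\frac{2}{3w(s')}$, for a total of at most $\frac{4}{3w(s')} = 2 \cdot \frac{2}{3w(s')}$, exactly twice its own contribution. Summing over $\mathcal{A}$ gives $v(\mathcal{O}) \le 2\, v(\mathcal{A})$, so the algorithm is $1/2$-approximate. The only non-routine step is the capacity-counting inequality $m \le k$; everything else is bookkeeping mirroring the standard greedy matching argument.
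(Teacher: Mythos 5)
Your proof is correct and gives the same factor $1/2$. It follows the same greedy-exchange template as the paper's proof---each accepted edge ultimately pays for at most two optimal edges, one lost at its $T$-endpoint and one lost to capacity at its $S$-endpoint---but the capacity constraint, which is the only place CVWAP genuinely differs from ordinary greedy weighted matching, is handled by a different key lemma. The paper maintains a nested sequence $O_0 \supseteq O_1 \supseteq \cdots \supseteq O_q = \emptyset$ of remnants of the optimum, deleting at each greedy step the optimal edge at $t$ and the lowest-weight optimal edge at $s$, and argues step by step that $M_i \cup O_i$ stays feasible because the sort order forces $w(t') \ge w(t)$. You instead charge globally and prove injectivity of the capacity charges via the counting inequality $m \le k$, obtained by sandwiching $2w(s^*)/y$ between $m$ and $k+1$, where $y$ is the weight of the first capacity-rejected neighbor of $s^*$. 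This is sound; the one observation you should make explicit is that no edge at $s^*$ is accepted after the first capacity rejection there (immediate, since every later edge at $s^*$ has $w(t) \ge y$ while the assigned weight at $s^*$ never decreases), which is what justifies both that all accepted neighbors of $s^*$ have weight at most $y$ and that all $k$ of them are already present when that first rejection occurs. With that sentence added your charging is airtight, and it arguably makes the feasibility bookkeeping more transparent than the paper's step-by-step invariant, at the cost of a slightly longer case analysis.
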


\begin{proof}
Denote by $M_i$ the partial solution after adding $i$ edges.
Hence, we have $S_0 = \emptyset$, $M_1 \subseteq M_2 \subseteq \ldots \subseteq M_q$, where $q$ is the number of edges in the final solution.
For the proof, we will define a sequence $O_0, \ldots, O_q$, where $O_0$ is the optimal solution of the CVWAP problem, and $O_0 \supseteq O_1 \supseteq \ldots \supseteq O_q = \emptyset$.
Moreover, we will maintain the invariant that $M_i \cup O_i$ is a valid solution.

Consider step $i$, in which we obtain $M_{i+1} = M_i \cup \{(s,t)\}$.
We construct $O_{i+1}$ as follows.
$O_{i+1}$ contains all edges of $O_i$ with possibly two exceptions.
First, if there is an edge with an endpoint $t$ in $O_i$ (clearly, there can be at most one), exclude it from $O_{i+1}$.
Second, if there is an edge with an endpoint $s$ in $O_i$, exclude the edge with the lowest weight among all such edges.
Hence, we have $|O_{i+1}| \geq |O_i| - 2$.

We now show that $M_i \cup O_i$ is a valid solution by using induction.
Clearly, the claim holds for $i=0$.
For an inductive step, we see immediately that at most one edge is incident to any vertex of $T$ thanks to the construction.
Moreover, the solution does not violate capacity constraints on the vertices of $S$ for the following reason.
When we add $st$ to the solution, either $O_i$ has no edges incident to $s$ (and $M_i \cup S_i$ does not violate the capacity constraint, since $M_i$ does not violate it), or we remove an edge $(s,t')$ of lower weight from $S_i$.
The weight of $(s,t')$ is not greater than the weight of $(s,t)$ (thanks to the order in which we consider edges) which implies that $w(t') \geq w(t)$.
Hence, we free up enough capacity to add the edge without violating the constraint.

Finally, we have that $O_q = \emptyset$, as otherwise we could extend $M_q$ with at least one more edge, which is clearly impossible.
To show the approximation ratio we prove that $v(M_{i}) - v(M_i-1) \geq (v(O_{i-1}) - v(O_{i}))/2$.
This follows from the fact that thanks to the greedy order of iteration, the contribution of the (at most two) edges in $O_{i-1} \setminus O_i$ to $v(\cdot)$ is at most the contribution of the only edge of $M_{i+1} \setminus M_i$.

As a result, we get 
\begin{align*}
	v(M_q) & = v(M_q) - v(M_0) = \sum_{i=1}^q (v(M_i) - v(M_{i-1}))\\
	& \geq 1/2 \sum_{i=1}^q (v(O_{i-1}) - v(O_i)) = 1/2 (v(O_0) - v(O_q)) = v(O_0)/2.
\end{align*}

\end{proof}

Finally, let us show how to extend our algorithm to the case when $\lambda \in [0, 1]$.

\begin{lemma}\label{lem:gt}
Let $\lambda > 0$.
An $\alpha$-approximate algorithm for optimizing $Q^0_w$ implies an $\alpha/(1+\alpha)$-approximate algorithm for optimizing $Q^\lambda_w$.
\end{lemma}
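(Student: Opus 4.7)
The plan is to run the given $\alpha$-approximate algorithm for $Q^0_w$ to obtain a clustering $\mathcal{C}_2$, also form the trivial singleton clustering $\mathcal{C}_1 = \{\{v\} : v \in V\}$, and return whichever of the two achieves the larger value of $Q^\lambda_w$. The analysis then reduces to a short case split.

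Set $A := Q^0_w(\Opt^\lambda_G)$ and let $k^*$ denote the number of clusters in $\Opt^\lambda_G$, so that by Definition~\ref{def:quality} we have $Q^\lambda_w(\Opt^\lambda_G) = A + \lambda k^*$. Each singleton has no intra-cluster edges, so $Q^0_w(\mathcal{C}_1) = 0$ and therefore $Q^\lambda_w(\mathcal{C}_1) = \lambda |V| \geq \lambda k^*$. For $\mathcal{C}_2$, since $\lambda > 0$ contributes nonnegatively to each cluster and $\Opt^0_G$ is at least as good as $\Opt^\lambda_G$ under the $Q^0_w$ objective, we have
\[
 Q^\lambda_w(\mathcal{C}_2) \;\geq\; Q^0_w(\mathcal{C}_2) \;\geq\; \alpha\, Q^0_w(\Opt^0_G) \;\geq\; \alpha A.
\]

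It remains to verify the elementary inequality $\max(\alpha A,\, \lambda k^*) \geq \frac{\alpha}{1+\alpha}\,(A + \lambda k^*)$. If $\alpha A \geq \lambda k^*$ then the bound $\alpha A \geq \frac{\alpha}{1+\alpha}(A+\lambda k^*)$ rearranges to $\alpha A \geq \lambda k^*$, which holds by assumption; the other case is symmetric. Outputting the better of $\mathcal{C}_1$ and $\mathcal{C}_2$ therefore achieves $Q^\lambda_w \geq \frac{\alpha}{1+\alpha}\,Q^\lambda_w(\Opt^\lambda_G)$. If the $\alpha$-approximation guarantee for $Q^0_w$ is only in expectation (as is the case in Theorem~\ref{thm:main-result}), the same argument goes through via $\E[\max(X_1,X_2)] \geq \max(\E[X_1], \E[X_2])$.

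The only mildly delicate point is the choice of competing candidate: one must pair the $Q^0_w$-approximation with the singleton clustering specifically so that its contribution $\lambda |V|$ dominates the $\lambda k^*$ term that the pure $Q^0_w$ guarantee disregards. Once that candidate is introduced, no further calculation is needed.
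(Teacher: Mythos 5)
Your proposal is correct and follows essentially the same route as the paper: both use the two candidates (the all-singletons clustering and the output of the $Q^0_w$-approximation), decompose $Q^\lambda_w(\Opt^\lambda_G)$ as $\lambda k^* + Q^0_w(\Opt^\lambda_G)$, and case-split on which term dominates. Your write-up is in fact slightly more careful than the paper's, since you make explicit that the algorithm must output the better of the two candidates (the paper's case analysis tacitly assumes this) and you address the in-expectation guarantee.
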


\begin{proof}
Let $\mathcal{C}$ be a clustering.
We have $Q^\lambda_w(\mathcal{C}) = \lambda \cdot k + Q^1_w(\mathcal{C}).$
We consider two cases: the first case is when $\lambda \cdot k$ is at least a constant factor of $Q^\lambda(\mathcal{C})$, and the second case is the remaining one.

Specifically, in the first case we assume $\lambda \cdot k > \alpha/(1+\alpha) \cdot Q^\lambda(\mathcal{C})$.
In this case outputting a clustering in which each node is in a separate cluster achieves objective of $\lambda \cdot n$, which using the above inequality provides $\alpha / (1+\alpha)$ approximation.

In the remaining case we have $\lambda \cdot k \leq \alpha/(1+\alpha) \cdot Q^\lambda(\mathcal{C})$, which implies $Q^1(\mathcal{C}) = Q^\lambda(\mathcal{C}) - \lambda \cdot k \geq (1 - \alpha / (1+\alpha)) \cdot Q^\lambda(\mathcal{C}) = 1 / (1+\alpha) \cdot Q^\lambda(\mathcal{C})$.
By multiplying both sides of this inequality by $\alpha$ we get the desired.
\end{proof}

We now combine the reduction of Lemma~\ref{lem:apxr} with the approximation algorithm for the CVWAP problem (and, whenever $\lambda \in (0, 1]$ with the reduction of Lemma~\ref{lem:gt}) to obtain our main theoretical result.
We note that the approximation ratio of the algorithm is $1/169$.

\thmmain*

\begin{proof}
Let us first consider $\lambda = 0$.
The algorithm first transforms the input graph to an instance of the CVWAP problem.
By Lemma~\ref{lem:bipartite}, this can be done in linear time.
The first step of solving CVWAP problem consists in sorting the edges in nonincreasing order, where the sorting key assigned to an edge $uv$ is $1/(w(u) + w(v))$.
This is equivalent to using $w(u) + w(v)$ as a sorting key and sorting in nondecreasing order.
The sorting takes $O(|E| \log |V|)$ time, or only linear time if the sorting keys are small integers, in which case we can use radix-sorting.

It is easy to see that the remaining steps of the algorithm for the CVWAP problem can be easily implemented in linear time.
Since the algorithm for the CVWAP problem is 1/2-approximate, by Lemma~\ref{lem:apxr} we get that the expected approximation ratio of the entire algorithm for $\lambda = 0$ is $\alpha = 1/168$.

To handle the case of $\lambda \in (0, 1]$ we apply Lemma~\ref{lem:gt}, and observe that the proof of the lemma is constructive and uses a linear-time algorithm.
We have that the approximation ratio is $\alpha / (1+\alpha) = 1/169$.
\end{proof}

\subsection{Normalized Modularity}\label{sec:nmod}
For a cluster $C \subseteq V$ in a graph with $m$ edges, the (unnormalized) modularity~\cite{newman} of $C$ is defined as
$$
M(C) = \frac{1}{m}\left(|E(C)| - \frac{\vol^2(C)}{m}\right).
$$
This can be extended to the modularity of a clustering $\mathcal{C} = \{C_1, \ldots, C_k\}$ in a natural way:
$$
\mmod(\mathcal{C}) = \sum_{i=1}^k M(C_i).
$$
The normalized modularity $\nmod$ is obtained by normalizing the contribution of each cluster by its volume. As observed in~\cite{nmod-definition, nmod-definition2, nmod-definition3, ganc}, we have
\begin{align}\label{eq:nmod}
\nmod(\mathcal{C}) & = \sum_{i=1}^k \frac{M(C_i)}{\vol(C_i)} = \frac{1}{m}\left( \sum_{i=1}^k \frac{|E(C_i)|}{\vol(C_i)} - \sum_{i=1}^k\frac{\vol(C_i)}{m}\right)
 = \frac{1}{m}\left(\frac{Q^0_\deg(\mathcal{C})}{2} - 1\right).
\end{align}

\begin{lemma}
Let $G$ be a graph and let $\mathcal{C}$ be an $\alpha$-approximate clustering of $G$ with respect to $Q^0$.
Moreover, assume that for some $\epsilon > 0$ there exists a clustering $\mathcal{C}'$ of $G$ into $k$ clusters such that $\ncut(\mathcal{C}') \leq k - (1+\epsilon)/\alpha$.
Then $\nmod(\mathcal{C}) \geq \frac{\epsilon \cdot \alpha}{1+\epsilon} \nmod(\Opt_{\nmod})$ where $\Opt_{\nmod}$ is a  clustering of $G$ of maximum normalized modularity.
\end{lemma}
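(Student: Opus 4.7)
The plan is to use the identity in Equation~(\ref{eq:nmod}), which expresses $\nmod$ as an affine, monotone increasing function of $Q^0_\deg$, to reduce the comparison of $\nmod$ values to a comparison of $Q^0_\deg$ values, and then to combine the $\alpha$-approximation guarantee on $\mathcal{C}$ with the $\ncut$ bound on the auxiliary clustering $\mathcal{C}'$.

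First I would record two easy observations. Because $\nmod(\mathcal{D}) = \frac{1}{m}(Q^0_\deg(\mathcal{D})/2 - 1)$ is an affine, monotone function of $Q^0_\deg(\mathcal{D})$, any clustering that maximizes $\nmod$ also maximizes $Q^0_\deg$; in particular $\Opt_{\nmod}$ is optimal for $Q^0_\deg$, and I will write $B := Q^0_\deg(\Opt_{\nmod})$. Second, applying Equation~(\ref{eq:ncut}) with $\lambda = 0$ to the hypothesized $k$-cluster partition $\mathcal{C}'$ yields $Q^0_\deg(\mathcal{C}') = k - \ncut(\mathcal{C}') \geq (1+\epsilon)/\alpha$, and hence $B \geq Q^0_\deg(\mathcal{C}') \geq (1+\epsilon)/\alpha$.

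Next, by the $\alpha$-approximation hypothesis, $A := Q^0_\deg(\mathcal{C}) \geq \alpha B$, which combined with the previous step gives $A \geq 1 + \epsilon$. Substituting the identity from Equation~(\ref{eq:nmod}) into both sides of the target inequality $\nmod(\mathcal{C}) \geq \tfrac{\epsilon\alpha}{1+\epsilon}\nmod(\Opt_{\nmod})$ and clearing the common factor of $1/(2m)$ reduces the statement to a purely algebraic inequality in $A$, $B$, $\alpha$, and $\epsilon$ (with a constant term arising from the $-1/m$ correction). I would close the argument by using $A \geq \alpha B$ on the left, and then invoking the lower bound $B \geq (1+\epsilon)/\alpha$ to absorb the additive constant and match the target factor $\tfrac{\epsilon\alpha}{1+\epsilon}$.

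The main obstacle is exactly this last algebraic step. The additive constant in Equation~(\ref{eq:nmod}) does not scale with $Q^0_\deg$, so an approximation guarantee for $Q^0_\deg$ does not on its own translate into an approximation guarantee for $\nmod$: if $\nmod(\Opt_{\nmod})$ were comparable to the correction term, the ratio could be arbitrarily bad. The precise threshold $(1+\epsilon)/\alpha$ in the hypothesis on $\ncut(\mathcal{C}')$ is calibrated so that $B$, and hence $\alpha B$, is large enough to dominate the additive constant on both sides, and the resulting loss relative to $\alpha$ is exactly the factor $\tfrac{\epsilon}{1+\epsilon}$ appearing in the conclusion.
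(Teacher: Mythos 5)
Your proposal is correct and follows essentially the same route as the paper's proof: both use the affine relation of Equation~(\ref{eq:nmod}) to identify $\Opt_{\nmod}$ with the $Q^0_\deg$-optimum, chain $Q^0_\deg(\mathcal{C}) \geq \alpha\, Q^0_\deg(\Opt_{\nmod}) \geq \alpha\, Q^0_\deg(\mathcal{C}') = \alpha(k - \ncut(\mathcal{C}')) \geq 1+\epsilon$, and then use this lower bound to absorb the additive constant, yielding exactly the $\tfrac{\epsilon}{1+\epsilon}$ loss (the paper phrases this as the monotonicity of $(x-1)/x$). Your final algebraic step, left as a sketch, does close as claimed via $A - 1 \geq \tfrac{\epsilon}{1+\epsilon}A \geq \tfrac{\epsilon\alpha}{1+\epsilon}B \geq \tfrac{\epsilon\alpha}{1+\epsilon}(B-1)$.
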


\begin{proof}
Thanks to Equation~\ref{eq:nmod} we observe that $\Opt_{\nmod}$ is also an optimal clustering with respect to the $Q^0_\deg$ measure.
This implies
\[
Q^0_\deg(\mathcal{C}) \geq \alpha \cdot Q^0_\deg(\Opt_{NMod}) \geq \alpha \cdot Q^0_\deg(\mathcal{C}') = \alpha (k - \ncut(\mathcal{C}')) \geq \alpha (1+\epsilon) / \alpha = 1+\epsilon
\]

By combining the above with fact that $(x-1)/x$ is increasing in $x$ (for $x \geq 1$) we obtain
\[
\frac{Q^0_\deg(\mathcal{C}) - 1}{Q^0_\deg(\mathcal{C})} \geq \frac{1+\epsilon-1}{1+\epsilon} = \frac{\epsilon}{1+\epsilon},
\]
which can be rewritten as $Q^0_\deg(\mathcal{C}) - 1 \geq \frac{\epsilon}{1+\epsilon}Q^0_\deg(\mathcal{C})$. Hence
\[
\frac{\nmod(\mathcal{C})}{\nmod(\Opt_{\nmod})} = \frac{Q^0_\deg(\mathcal{C}) - 1}{Q^0_\deg(\Opt_{\nmod})-1} \geq \frac{\epsilon \cdot Q^0_\deg(\mathcal{C})}{(1+\epsilon)Q^0_\deg(\Opt_{\nmod})} \geq \frac{\epsilon \cdot \alpha}{1+\epsilon}.
\]
\end{proof}




\section{Conclusion}

In this paper, we have studied a general clustering objective which is connected to well known clustering objectives such as modularity, normalized cut and cluster density and
where the quality of a cluster is determined by the number of edges inside the cluster divided by the sum of its vertex weights. Our goal is to maximize the sum of cluster qualities where we regularize by $\lambda$ times the number of clusters. For a wide range of choice of $\lambda$ we develop a $O(1)$-approximation under this quality measure. 

\bibliographystyle{alpha}
\bibliography{references}

\appendix

\section{Omitted Proofs}

\thmub*

\begin{proof}
  We start by proving the second inequality.
  Let $\Opt_G = \{C_1, \dots, C_k\}$ be a partition that maximizes $Q^0(\mathcal C)$. 
  It is easy to see that each cluster induces a connected subgraph of $G$.
  Indeed, if a subgraph $S$ induced by a cluster $C$ is disconnected, splitting $C$ into the connected components of $S$ increases the objective.

  We will argue that $\cquality(C_i)$ is bounded by two times the cost of the maximum spanning tree of $C_i$. The total cost of the spanning trees of all $C_i$ is at most $M$, which implies the result. Let $C=C_i$ be an arbitrary cluster of $\mathcal C$. We direct the edges with both end points in $C$ towards its end point with higher degree (ties are broken
  arbitrarily). Let $E_C$ be the set of edges obtained this way and let $E_C(v)$, $v\in C$, be the edges directed to $v$. The sets $E_C(v)$ form a partition of $E_C$.
  If for some $v\in C$ we have $2 \cdot \sum_{e\in E_C(v)} W(e) \ge \cquality(C)$ we argue that taking the tree formed by the set $E_C(v)$ has cost at least $\cquality(C)$.
  Indeed, the cost of this tree is $\sum_{e\in E_C(v)} W(e)$ which is at least $\cquality(C)/2$ by our assumption. 
  
  Thus, let us assume for sake of contradiction that for all $v\in C$ we have
  $2\cdot \sum_{e\in E_C(v)} W(e) < \cquality(C)$. Now observe that for $e\in E_C(v)$ we have $W(e) = \frac{1}{\deg(v)}$. Using this we obtain
  $$
  2\cdot \sum_{v\in C} |E_C(v)| \leq 2\cdot \sum_{v\in C} \frac{|E_C(v)|}{\deg(v)} \left( \sum_{v\in C} \deg(v)\right)  < \cquality(C) \cdot \sum_{v\in C} \deg(v). 
  $$
  This implies that
  $$
  \frac{2 \cdot \sum_{v\in C} |E_C(v)|} {\sum_{v\in C} \deg(v)} < \cquality(C),
  $$
  which is a contradiction since
  $$
  \frac{2 \cdot \sum_{v\in C} |E_C(v)|} {\sum_{v\in C} \deg(v)} = \frac{2|E(C)|}{\vol(C)} = \cquality(C).
  $$
  To prove the first inequality, we construct a clustering. We start by removing
  all edges of weight at most $1/\sqrt{n}$ from our maximum spanning forest. Let $F$ 
  be the resulting forest and let $M'$ be its weight. We observe that $M' \ge M-\sqrt{n}$.
  Now consider a greedy algorithm on $F$ that takes the most expensive edge $(u,v)$ and 
  creates a cluster of its endpoints and then removes all adjacent edges.
  Such a cluster has quality $\frac{2}{\deg(u)+\deg(v)}$ (which is at least the edge weight) and there are at most $\deg(u)+\deg(v)-2 \le 2 \sqrt{n}$ removed edges. 
  Since $(u,v)$ was an edge of maximum weight, we know that the weight of the removed 
  edges is at most $2\sqrt{n}$ times the weight of $(u,v)$.
  Hence, we obtain a solution with quality at least 
  $\frac{M'}{3\sqrt{n}} \ge\frac{M-\sqrt{n}}{3\sqrt{n}}$, which implies
  $Q^1(\Opt_G) \ge \frac{M-\sqrt{n}}{3\sqrt{n}}$. Rearranging the
  inequality implies the first inequality of the theorem.

\end{proof}

\lembipartite*

\begin{proof}
The algorithm runs in two steps. First, we color each vertex of $G$ red or blue independently 
at random. Then, we remove all edges whose both endpoints have the same color.
Denote the resulting graph by $G' = ((S\cup T), E', w)$, where we randomly select $S$ as being the set of red or the set of blue vertices and $T$ to be the other set. Now we remove all edges $(s,t)$ with
$s\in S$ and $t\in T$ that have $w(s) > w(t)$ and call the resulting set $E_H$.
For an edge $e=(u,v) \in E'$ let $X_e$ be the indicator random variable for the event
that $e \in E_H$. We claim that $\Pr[X_e=1]\ge 1/4$. Indeed, we observe that with probability $1/2$ the vertices $u$ and $v$ are colored differently and with probability at least $1/2$ the sides are chosen such
that $w(s) \le w(t)$. Now consider an optimal clustering $\Opt_G$ in $G$. We can write 
$q_G(\Opt_G) = \sum_{(u,v)\in E} \mathbbm{1}_{\Opt_G(u) = \Opt_G(v)}\frac{2}{w(\Opt_G(u))}$.
If $\mathcal{C}$ is a clustering of $V$ and $v \in V$ is a vertex, we use $\mathcal{C}(v)$ notation to denote the cluster containing $v$ in $\mathcal{C}$.
Hence, 
\begin{eqnarray*}
\E[q_H(\Opt_H)] & \ge & \E[q_H(\Opt_G)] \\
& = & \E[\sum_{(u,v)\in E_H} \mathbbm{1}_{\Opt_G(u)=\Opt_G(v)}\frac{2}{w(\Opt_G(u))}]\\
& = &  \sum_{e=(u,v)\in E} \E[X_e] \mathbbm{1}_{\Opt_G(u)=\Opt_G(v)}\frac{2}{w(\Opt_G(u))}\\
&\ge& \frac{1}{4}  q_G(\Opt_G)
\end{eqnarray*}
%
%
%
%
\end{proof}

\lemapxr*

\begin{proof}
The algorithm first applies the transformation of Lemma \ref{lem:bipartite} which
returns a bipartite graph $H$.
By the guarantees of Lemma \ref{lem:bipartite} we know that the expected quality of an optimal solution is at least $Q^0_w(\Opt_G)/4$, where $\Opt_G$ is an optimal clustering in $G$. 

Now observe that both the CVWAP problem and Lemma~\ref{lem:stars} consider partial clusterings $\mathcal{C} = \{C_1, \ldots, C_k\}$ such that $|C_i \cap S| = 1$ and $w(C_i \cap T) \leq 2w(C_i \cap S)$.
Let us call such partial clusterings \emph{restricted}.
Consider the restricted partial clustering $\mathcal C$ that is computed on input $H$ by our $\alpha$-approximation algorithm
for CVWAP as well as a restricted partial clustering
 $\Opt =\{O_1,\dots,O_k\}$ that maximizes 
$v(\Opt) = \sum_{i=1}^k \frac{2|O_i \cap T|}{3w(s_i)}$.
We clearly have $ v(\mathcal C) \ge \alpha \cdot v(\Opt)$ by the guarantee
of our approximation algorithm.
Observe that for any restricted $\mathcal{C}$ we have
$ 
	v(\mathcal{C}) = \sum_{i=1}^k \frac{2|C_i \cap T|}{3w(s_i)} \leq \sum_{i=1}^k \frac{2|C_i \cap T|}{w(s_i) + w(C_i \cap T)} = Q^0_w(\mathcal{C})
$
and
$
	v(\mathcal{C}) = \sum_{i=1}^k \frac{2|C_i \cap T|}{3w(s_i)} \geq \sum_{i=1}^k \frac{2|C_i \cap T|}{3(w(s_i) + w(C_i \cap T))} = Q^0_w(\mathcal{C})/3.
$
Furthermore, by Lemma~\ref{lem:stars}, we have $Q^0_w(\Opt) \geq Q^0_w(\Opt_H)/7$.
Hence,
$Q^0_w(\mathcal C) \ge v(\mathcal C) \ge \alpha \cdot v(\Opt) \ge \alpha \cdot Q^0_w(\Opt)/3 \ge \alpha \cdot Q^0_w(\Opt_H)/21$.
Thus, using linearity of expectation, Lemma \ref{lem:bipartite} it follows that the overall expected approximation ratio is $84 \cdot \alpha$.
\end{proof}

\end{document}